\def\Mat{\text{Mat}}
\def\Hilbert{\mathbb{H}}
\def\bbC{\mathbb{C}}
\def\bbN{\mathbb{N}}
\def\bbL{\mathbb{L}}
\def\bbD{\mathbb{D}}
\def\bbE{\mathbb{E}}
\def\ZW{ZW$_\bbC$\,}
\def\id{\text{id}}
\def\issound{\text{ is sound}}
\newtheorem{theorem}{Theorem}
\newtheorem{definition}[theorem]{Definition}
\newtheorem{proposition}[theorem]{Proposition}
\newtheorem{corollary}[theorem]{Corollary}
\newtheorem{example}[theorem]{Example}
\newtheorem{remark}[theorem]{Remark}
\renewcommand\expandafter\subsection\expandafter{%
    \expandafter\@fb@secFB\subsection
  }%
\newcommand\ket[1]{|\, #1 \, \rangle}
\newcommand{\node}[2]{
	\begin{tikzpicture}[scale=0.3,
    every node/.style={scale=0.75}]
	\begin{pgfonlayer}{nodelayer}
		\node [style=#1] (1)   at (0, 0)   {$#2$};
		\node [style=none] (2)   at (0, 1)   {};
		\node [style=none] (3)   at (0, -1)   {};
	\end{pgfonlayer}
	\begin{pgfonlayer}{edgelayer}
		\draw [-] (1.center) to (2.center);
		\draw [-] (1.center) to (3.center);
	\end{pgfonlayer}
	\end{tikzpicture}
}
\newcommand{\spider}[2]{
	\begin{tikzpicture}[scale=0.3,
    every node/.style={scale=0.75}]
	\begin{pgfonlayer}{nodelayer}
		\node [style=#1] (1)   at (0, 0)   {$#2$};
		\node [style=none] (2)   at (-1, 2)   {};
		\node [style=none] ()   at (0, 2)   {$\dots$};
		\node [style=none] (3)   at (1, 2)   {};
		\node [style=none] (4)   at (-1, -2)   {};
		\node [style=none] ()   at (0, -2)   {$\dots$};
		\node [style=none] (5)   at (1, -2)   {};
	\end{pgfonlayer}
	\begin{pgfonlayer}{edgelayer}
		\draw [-] (1.center) to (2.center);
		\draw [-] (1.center) to (3.center);
		\draw [-] (1.center) to (4.center);
		\draw [-] (1.center) to (5.center);
	\end{pgfonlayer}
	\end{tikzpicture}
}
\newcommand{\wspider}{
	\begin{tikzpicture}[scale=0.3,
    every node/.style={scale=0.75}]
	\begin{pgfonlayer}{nodelayer}
		\node [style=black] (1)   at (0, 0)   {};
		\node [style=none] (2)   at (-1, 2)   {};
		\node [style=none] ()   at (0, 2)   {$\dots$};
		\node [style=none] (3)   at (1, 2)   {};
	\end{pgfonlayer}
	\begin{pgfonlayer}{edgelayer}
		\draw [-] (1.center) to (2.center);
		\draw [-] (1.center) to (3.center);
	\end{pgfonlayer}
	\end{tikzpicture}
}
\newcommand{\zxspider}[2]{
\begin{tikzpicture}
	\begin{pgfonlayer}{nodelayer}
		\node [style=gn] (0) at (-1, 0.25) {$#1$};
		\node [style=gn] (1) at (-1, -0.25) {$#2$};
		\node [style=gn] (2) at (1, 0) {$#1 + #2$};
		\node [style=none] (3) at (0, 0) {=};
		\node [style=none] (4) at (-1.5, 0.75) {};
		\node [style=none] (5) at (-0.5, 0.75) {};
		\node [style=none] (6) at (-1.5, -0.75) {};
		\node [style=none] (7) at (-1, -0.75) {};
		\node [style=none] (8) at (-0.5, -0.75) {};
		\node [style=none] (9) at (0.5, -0.75) {};
		\node [style=none] (10) at (1, -0.75) {};
		\node [style=none] (11) at (1.5, -0.75) {};
		\node [style=none] (12) at (0.5, 0.75) {};
		\node [style=none] (13) at (1.5, 0.75) {};
	\end{pgfonlayer}
	\begin{pgfonlayer}{edgelayer}
		\draw (4.center) to (0.center);
		\draw (5.center) to (0.center);
		\draw (0.center) to (1.center);
		\draw (1.center) to (6.center);
		\draw (1.center) to (7.center);
		\draw (1.center) to (8.center);
		\draw (12.center) to (2.center);
		\draw (2.center) to (13.center);
		\draw (2.center) to (9.center);
		\draw (2.center) to (10.center);
		\draw (2.center) to (11.center);
	\end{pgfonlayer}
\end{tikzpicture}
}
\def\tripleblobmaths{
\begin{tikzpicture}[transform canvas={scale=0.5}]
		\node [style=none] (1)   at (0, 0.5)   {};
		\node [style=none] (2)   at (0, -0.5)   {};
		\draw [-] (1) to (2);
		\draw [bend left=35] (1.center) to (2.center);
		\draw [bend right=35] (1.center) to (2.center);
		\node [style=gn] (g)   at (0, 0.5)   {};
		\node [style=rn] (r)   at (0, -0.5)   {};
	\end{tikzpicture}
}
\def\oneoverrt2{
\tripleblobmaths
\quad
}
\newcolumntype{x}[1]{>{\centering\arraybackslash\hspace{0pt}}p{#1}}
\newcommand{\set}[1]{\{#1\}}
\newcommand{\at}[1]{\Big |_{#1}}
\newcommand{\family}[1]{\left\{ \quad #1 \quad \right\}}
\def\alphadelta{\alpha_1, \dots, \alpha_n, \delta_1, \dots, \delta_m}
\newcommand{\inv}{^{-1}}
\newcommand{\abs}[1]{|#1|}
\newcommand{\comp}{\circ}
\newcommand{\interpret}[1]{\left\llbracket \; #1 \; \right\rrbracket}
\def\tensor{\ensuremath\otimes}
\def\iso{\cong}
\definecolor{zx_red}{RGB}{232, 165, 165}
\definecolor{zx_green}{RGB}{216, 248, 216}
\tikzstyle{none}=[inner sep=0pt]
\tikzstyle{simple}=[inner sep=0pt]
\tikzstyle{gn}=[rectangle,rounded corners=0.4em,fill=zx_green,draw=Black,line width=0.4 pt,inner sep=2pt,minimum width=1em,minimum height=1em]
\tikzstyle{rn}=[rectangle,rounded corners=0.4em,fill=zx_red,draw=Black,line width=0.4 pt,inner sep=2pt,minimum width=1em,minimum height=1em]
\tikzstyle{H}=[rectangle,fill=Yellow!30!White,draw=Black]
\tikzstyle{ZH}=[rectangle,fill=White,draw=Black,inner sep=3pt]
\tikzstyle{ZHGrey}=[circle,fill=Gray,draw=Black,inner sep=3pt]
\tikzstyle{white}=[rectangle,rounded corners=0.4em,fill=White,draw=Black,line width=0.4 pt,inner sep=1.5pt,minimum width=1em,minimum height=1em]
\tikzstyle{black}=[circle,fill=black,draw=Black]
\tikzstyle{crossing}=[circle,draw=Black,inner sep=3pt]
\tikzstyle{grey}=[circle,fill=gray,draw=Black]
\tikzstyle{block}=[rectangle,fill=Orange!60!White,draw=Black]
\tikzstyle{plain}=[rectangle,inner sep=3pt]
\tikzstyle{box}=[draw, rectangle,inner sep=3pt]
\tikzstyle{dtriangle}=[fill=yellow,draw=black,shape=isosceles triangle,shape border rotate=-90,isosceles triangle stretches=true,inner sep=1pt,minimum width=1em,minimum height=1em]
\tikzstyle{vtriang}=[fill=yellow,draw=black,shape=isosceles triangle,shape border rotate=180,isosceles triangle stretches=true,inner sep=1pt,minimum width=1em,minimum height=1em]
\tikzstyle{triangle}=[fill=yellow,draw=black,shape=isosceles triangle,shape border rotate=90,isosceles triangle stretches=true,inner sep=1pt,minimum width=1em,minimum height=1em]
\tikzstyle{empty}=[rectangle,dashed,draw=Black,minimum width=1.5em,minimum height=1.5em]
\tikzstyle{bbox}=[rectangle,dashed,draw=Black,minimum width=1.5em,minimum height=1.5em]
\tikzstyle{net}=[rectangle,minimum width=1.5em,minimum height=1.5em]
\tikzstyle{arrow}=[->,draw=black]
\tikzstyle{light-arrow}=[->,draw=gray]
\tikzstyle{every picture}=[baseline=-0.25em,scale=0.75, every node/.style={transform shape}]
\tikzset{global scale/.style={
    scale=0.5,
    every node/.style={scale=0.5}
  }
}
\tikzset{
	0c/.style={circle, draw, fill, inner sep=1.5pt},
	1c/.style={->, thick, shorten <=2pt, shorten >=2pt},
	1cboth/.style={<->, thick, shorten <=2pt, shorten >=2pt},
	1clong/.style={->, thick},
	1cthin/.style={->, shorten <=4pt, shorten >=4pt},
	1cdot/.style={->, dashed, thick, shorten <=2pt, shorten >=2pt},
	1cinc/.style={right hook->, thick, shorten <=2pt, shorten >=2pt},
	1cincl/.style={left hook->, thick, shorten <=2pt, shorten >=2pt},
	follow/.style={->, >=stealth, very thick, shorten <=3pt, shorten >=3pt, color=magenta},
	2c/.style={double, thick, shorten <=6pt, shorten >=8pt, decoration={markings,mark=at position -6pt with {\arrow[scale=1.75]{>}}}, preaction={decorate}},
	2cdot/.style={double, dashed, thick, shorten <=10pt, shorten >=10pt, decoration={markings,mark=at position -8pt with {\arrow[scale=1.75]{>}}}, preaction={decorate}},
	3c1/.style={thick, double, double distance=3pt, shorten <=9pt, shorten >=11pt},
    	3c2/.style={thick, shorten <=9pt, shorten >=10pt},
	3c3/.style={shorten <=9pt, shorten >=10pt, decoration={markings,mark=at position -8pt with {\arrow[scale=3]{>}}},preaction={decorate}},
	4c1/.style={thick, double, double distance=4pt, shorten <=1pt, shorten >=2.75pt},
	4c2/.style={thick, double, double distance=1pt, shorten <=1pt, shorten >=1.25pt, decoration={markings,mark=at position -.05pt with {\arrow[scale=3,ultra thin]{>}}},preaction={decorate}},
	edge/.style={line width=.8pt, color=black},
	edgedot/.style={densely dotted, line width=.8pt, color=black},
	edgethdot/.style={densely dotted, line width=.4pt, color=gray},
	edgeth/.style={line width=.4pt, color=gray!60},
	edgethin/.style={line width=.8pt, color=gray!60},
	edgedotdark/.style={densely dotted, line width=.8pt, color=gray!80},
	dot/.style={circle, draw=black, line width=.8pt, fill=white, inner sep=1.7pt},
	dotth/.style={circle, draw=gray!60, fill=gray!60, inner sep=1.5pt},
	dotwh/.style={circle, draw=gray!60, line width=.4pt, fill=white, inner sep=1.7pt},
	dotwhite/.style={circle, draw=black, line width=.8pt, fill=white, inner sep=1.8pt},
	dotdark/.style={circle, draw, fill=black, inner sep=1.5pt},
	dotgrey/.style={circle, draw=black, line width=.8pt, fill=gray!60, inner sep=1.8pt},
	trian/.style={regular polygon,regular polygon sides=3,shape border rotate=0,fill=white, line width=.8pt, draw=black, inner sep=1.8pt},
	trianh/.style={regular polygon,regular polygon sides=3,shape border rotate=0,fill=white, draw=gray!60, line width=.4pt, inner sep=1.8pt},
	trib/.style={regular polygon,regular polygon sides=3,shape border rotate=0,fill=black, draw, inner sep=1.5pt},
	tribh/.style={regular polygon,regular polygon sides=3,shape border rotate=0,fill=gray!60, draw=gray!60, inner sep=1.5pt},
	tribco/.style={regular polygon,regular polygon sides=3,shape border rotate=180,fill=black, draw, inner sep=1.5pt},
	cover/.style={circle, draw=gray!10, fill=gray!10, inner sep=3.5pt},
	coverc/.style={circle, draw=gray!80, line width=.4pt, inner sep=3pt},
	coverch/.style={circle, draw=gray!30, line width=.4pt, inner sep=3pt},
	coverb/.style={circle, draw=gray!80, line width=.4pt, fill=gray!10, inner sep=3.5pt},
	%every node/.style={scale=.8},
}
\newcommand\bb[1]{\mathbb{#1}}
\newcommand\maxdegree[3]{\text{deg}^{#1}_{#2} \;#3}
\title{Finite Verification of Infinite Families of Diagram Equations}
\author{
Hector Miller-Bakewell
\institute{Department of Computer Science, University of Oxford, Parks Road, Oxford, OX1 3QD}
\email{hector.miller-bakewell@cs.ox.ac.uk}
    }
\def\bboxes{\mbox{!-boxes}}
\begin{document}
\maketitle

\begin{abstract}
% For anyone
The ZX, ZW and ZH calculi are all graphical calculi for reasoning about
pure state qubit quantum mechanics.
% Detailed
All of these languages use certain diagrammatic decorations,
called !-boxes and phase variables, to indicate
not just one diagram but an infinite family of diagrams.
These decorations are powerful enough to allow complete rulesets
for these calculi to be expressed in around fifteen rules.
% General problem
Historically rules involving \bboxes\ have not been verifiable by computer.
% Main result
We present the first algorithm for reducing infinite families of equations
involving \bboxes\ into finite verifying subsets.
The only requirement for this method is a mild property on the connectivity of the \bboxes.
% Context for result
Previous results had focussed on finite case analysis of phase variables in ZX,
a result we also extend for ZW and ZH, as well as providing a general framework for further languages.
The results presented here allow proof assistants to reduce
infinite families of problems (involving combinations of phase variables and !-boxes) down to undecorated, case-by-case verification,
in a way not previously possible.
In particular we note the removal of the need to reason directly with !-boxes
in verification tasks as something entirely new.
% Broader perspective
This forms part of larger work in automated verification
of quantum circuitry, conjecture synthesis, and diagrammatic languages in general.
The methods described here extend to any diagrammatic languages that meet certain simple conditions.
\end{abstract}

\section{Introduction}

The investigation of qubit graphical calculi began with Coeke and Duncan in \cite{Coecke08},
where they created the ZX calculus:
A sound, universal calculus that was shown to be complete in \cite{UniversalComplete}.
Since then ZX has been used for such things as reasoning about quantum error correction
via lattice surgery
(\cite{2017arXiv170408670D} and \cite{2018arXiv181201238G},)
through to being the basis for taught courses in quantum computing \cite{PQP}.
The ZW$_\bbC$ calculus, invented by Hadzihasanovic \cite{ZW},
presents a different point of view;
rather than focus on the Z and X rotations of the Bloch sphere as ZX does,
it focusses on the GHZ and W entanglement states.
The ZH calculus (Kissinger and Backens, \cite{ZH})
has another viewpoint again; that of extending the notion of Hadamard and CCZ gates.

Each of these calculi has strength in different areas,
and all of them are complete and universal for pure state qubit quantum mechanics.
Our interest is that, with one exception, the rules of these calculi are expressed in a finite manner;
that is to say a finite collection of parameterised families of equations.
This parameterisation is an expression of two types of regular structure:
\begin{itemize}
\item Phase variables: e.g. ``This $X$ can be any complex number''
\item !-boxes: e.g. ``This part of the diagram can be repeated 0 or more times''
\end{itemize}
The one exception to this being the (EU) rule of the ZX calculus (see \cite{VilmartZX})
which use what we call \emph{side conditions},
a case we shall not be considering in this paper.

The aim of this paper is to show when these parameterisations
admit themselves to finite verification:
A process by which a computer can verify an entire infinite family
by checking a finite number of cases.
For phase variables we exploit the
properties of the polynomial functions they represent (theorem \ref{thmparameter}),
and for !-boxes we exploit the finite dimensionality of the
space in which their repeated structure resides (theorem \ref{thmbbox2}).
We show how these two types of parameterisation interact in theorem \ref{thmboth}
before finally giving a constructive method for finding
a verifying subset in theorem \ref{thmfinite}.

In essence finite verification allows us to start with an infinite family of equations
(indicated by a pair of diagrams decorated with phase variables and !-boxes,)
and show that this entire family of equations are sound by only checking
a finite number of individual equations, none of which are decorated.

\section{Parameterised families of diagrams}

In this chapter we define the language with which we discuss decorations, infinite families and so on.
We then give a few examples of the way we use this language.

\begin{definition}  \label{defDiagramFromProp}
A \textbf{diagram} is a morphism in a PROP (see e.g. \cite[p 97]{maclane1965});
it has $n \in \bbN$ inputs (at the top) and $m \in \bbN$ outputs (at the bottom.)
Morphisms in the PROP are constructed from a set of generators using horizontal composition (given the symbol $\tensor$)
and vertical composition (given the symbol $\comp$.)
\end{definition}

\begin{definition} \label{defMatrixInterpretation}
A \textbf{complex matrix interpretation} for a PROP of diagrams $\bbL$, written $\interpret{\cdot}$,
is a monoidal functor from $\bbL$
to $\Mat_\bbC$. I.e. a functor that preserves the $\tensor$ and $\comp$ products of diagrams:
\begin{align}
\interpret{\cdot} & : \bbL \to \Mat_\bbC \\
\interpret{\bbD_1 \tensor \bbD_2} &= \interpret{\bbD_1} \tensor \interpret{\bbD_2} \\
\interpret{\bbD_1 \comp \bbD_2} &= \interpret{\bbD_1} \comp \interpret{\bbD_2}
\end{align}

(We are using $\tensor$ to represent the Kronecker product of matrices,
and $\comp$ to represent standard matrix multiplication.)
\end{definition}

We include a list of the generators of ZX, ZH, and ZW$_\bbC$, as well as their interpretations into $\Mat_{\bbC}$,
in appendix \ref{secgenerators}.

\begin{definition}  \label{defSimpleSound}
An equation of diagrams $\bbD_1 = \bbD_2$ is \textbf{sound} if $\interpret{\bbD_1} = \interpret{\bbD_2}$
\end{definition}

\begin{definition}
A \textbf{phase algebra}
describes additional structure used in presenting infinite families of rules.
Diagram generators are decorated with terms from the phase algebra.
Elements of this algebra are called \textbf{phases} (introduced in \cite{Coecke08}),
and a \textbf{phase variable} is a formal variable adjoined to the phase algebra.
For universal ZX the phase group is $([0, 2\pi), +_{2\pi})$,
and for ZH and ZW$_\bbC$ the phase ring is the complex numbers.
\end{definition}

\begin{example} \label{exaPhaseWithPolynomial}
A ZH diagram that has no phase variables has every phase an element of $\bbC$.
A ZH diagram that has a single phase variable $\alpha$ has every phase an element of $\bbC[\alpha]$.
Unused phase variables (and later, unused !-boxes) are ignored.
\end{example}

\begin{definition}
A \textbf{!-box} (introduced in \cite{Kissinger2012PatternGR}, discussed in more detail in \cite{MerryThesis})
is a special vertex $\delta$ added to the diagram.
A vertex $v$ is described as ``in'' a particular !-box if there is a directed edge from $\delta$ to $v$.
Instantiating the !-box $\delta$ with $n$ copies is performed by:
\begin{itemize}
	\item Making $n$ copies of every vertex in $\delta$, maintaining any connections to other vertices.
	\item Deleting the vertex $\delta$
\end{itemize}
\end{definition}

\begin{definition}  \label{defSimpleDiagram}
A diagram that does not contain any phase variables or !-boxes is called \textbf{simple},
or otherwise called \textbf{decorated}.
\end{definition}

\begin{definition} We will use the term \textbf{instantiation} to refer to both the evaluation of a phase variable
and the copying procedure of !-boxes.
We use the notation $\alpha | \alpha = a$ to indicate that $\alpha$ has been instantiated at the value $a$.
\end{definition}

An explicit instantiation is given in Example \ref{exaExplicitInstantiation}.

\begin{definition}
We write
\begin{align}
\family{\bbD_1 = \bbD_2}_{\alpha_1, \dots, \alpha_n, \delta_1, \dots, \delta_m}
\end{align}
for the family of simple equations between diagrams
parameterised by the $\alpha_j$ and $\delta_k$.
\end{definition}

\begin{definition} The parameterised equation $\bbE$ (between diagrams $\bbD_1$ and $\bbD_2$) is \textbf{sound}
if for any choice of parameter values the resulting equation between simple diagrams is sound.
(See Definition~\ref{defSimpleSound})
\end{definition}

We show restrictions of the parameter values 
by expressing the instantiated value as belonging to some subset.
For example if the family $\bbE$ is sound for $\alpha_2 = a_2$ whenever $a_2 \in A_2$ we will just write:
\begin{align}
\family{\bbE}_{\alpha_1, \dots, \alpha_n, \delta_1,\dots,\delta_m | \alpha_2 \in A_2} \issound
\end{align}

\begin{example} \label{exaExplicitInstantiation}
The (slightly simplified) spider law in universal ZX is parameterised over
\begin{itemize}
\item $\delta_1 \in \bb{N} $ inputs and $\delta_2 \in \bb{N} $ outputs 
\item $\alpha_1, \alpha_2 \in [0, 2\pi)$
\end{itemize}
And so we write the parameterised family of equations as:

\begin{align}
\family{
	
\InputIfFileExists{param-s1-l}{}{\input{./tikz/param-s1-l.tikz}}
 = \; 
\InputIfFileExists{param-s1-r}{}{\input{./tikz/param-s1-r.tikz}}

}_{\alpha_1, \alpha_2, \delta_1, \delta_2}
\end{align}

We now instantiate some of the parameters of the spider law,
resulting in what is still an infinite, parameterised family:

\begin{align}
&\family{
	
\InputIfFileExists{param-s1-l}{}{\input{./tikz/param-s1-l.tikz}}
 = \; 
\InputIfFileExists{param-s1-r}{}{\input{./tikz/param-s1-r.tikz}}

}_{\alpha_1, \alpha_2, \delta_1, \delta_2 | \alpha_1 = \pi, \delta_1 = 2} 
= &\family{
	
\InputIfFileExists{param-s1-l-pi}{}{\input{./tikz/param-s1-l-pi.tikz}}
 = \; 
\InputIfFileExists{param-s1-r-pi}{}{\input{./tikz/param-s1-r-pi.tikz}}

	}_{\alpha_2, \delta_1, \delta_2 | \delta_1 = 2} \\
= &\family{
	
\InputIfFileExists{param-s1-l-pi-2}{}{\input{./tikz/param-s1-l-pi-2.tikz}}
 = \; 
\InputIfFileExists{param-s1-r-pi-2}{}{\input{./tikz/param-s1-r-pi-2.tikz}}

	}_{\alpha_2, \delta_2}
\end{align}

\end{example}

\section{Verifying phase parameters}
\label{secparameter}

Our first result will concern diagrams that contain a finite number of phase variables and no !-boxes.
It relies on a certain property of polynomials: That if you know the value of the polynomial $P(Y)$
for sufficiently many values of $Y$ then you can determine all the coefficients of $P$.
For example the polynomial $P(Y_1, Y_2) = a + b Y_1 + c Y_2 + d Y_1 Y_2$ can have all of its coefficients
determined by knowing the values $P(0,0)$, $P(0,1)$, $P(1,0)$, and $P(1,1)$.
We use this fact by extending the matrix interpretation of simple diagrams to one for decorated diagrams.

\subsection{Matrix interpretations}

All of the graphical languages considered in this paper come equiped with a complex matrix interpretation,
see Definition~\ref{defMatrixInterpretation}.
Moreover for each of these languages a diagram with $n$ inputs and $m$ outputs will be mapped to a matrix
with $\dim \Hilbert^{\tensor n}$ columns and $\dim \Hilbert^{\tensor m}$ rows,
where $\Hilbert$ represents $\bbC^2$.
It is this interpretation that allows us to use these languages to represent transformations in quantum computing.
Appendix \ref{secinterpretations} contains matrix interpretations for the ZX, ZH and ZW calculi,
or see \cite{ZH} and \cite{ZW} for the ZH and ZW interpretations, and
either \cite{Coecke08} or \cite{VilmartZX} for the ZX interpretation.

A set of diagrams parameterised over $\alpha$ is a set of simple diagrams,
and we could extend our interpretation
so that a set of simple diagrams is sent to a set of matrices.
This would, however, lose any structure from the phase algebra.
Instead we try to find a polynomial matrix interpretation;
for example one that sends a family of ZH equations $\set{\bbE}_{\alpha}$ to a matrix in $\Mat_{\bb{C}[\alpha]}$.

While this works well for ZH and \ZW, there is a complication with ZX:
A phase variable $\alpha$ in a ZX diagram corresponds to an $e^{i \alpha}$ in
the matrix interpretation.
We can try performing the substitution $Y:= e^{i \alpha}$,
but run into trouble if there is a node containing, for example, $-\alpha$
(and accordingly $Y\inv$ in the matrix,)
since polynomials do not normally allow negative powers.
Rather than stick with standard polynomials we instead move to Laurent polynomials;
polynomials that do allow positive and negative powers,
and define all the properties we will need of them below.

ZX also introduces one more sublety: 
There is an extra relation from the phase group ($2\pi = 0$) that we should take care to reflect in our matrix interpretation.
This does not impact universal ZX,
but does affect the fragments of ZX with a finite phase group
(Example \ref{excliffordt} demonstrates this for the Clifford+T fragment.)

\begin{definition}
A \textbf{complex Laurent polynomial interpretation} is a matrix interpretation:
\begin{align}
	\interpret{\cdot} 
	&: \bbL[\alpha_1, \dots, \alpha_n] \to \Mat_{\bbC[Y_1, Y_1\inv, \dots, Y_n, Y_n\inv]}
\end{align}

The source category here is the PROP of diagrams where generators are now taken not from the phase algebra of $\bbL$
but from this phase algebra adjoin the phase variables $\alpha_1 \dots \alpha_n$.

See appendix \ref{secinterpretations} for explicit interpretations of the ZX, ZW and ZH generators into
$\Mat_{\bbC[Y_1, Y_1\inv, \dots, Y_n, Y_n\inv]}$.
\end{definition}

\begin{example}
The Z spider from universal ZX is parameterised by an $\alpha \in [0, 2\pi)$,
and the (simple) matrix interpretation of some Z spider (with $\alpha$ instantiated at $a$) is:

\begin{align}
	\interpret{\family{\spider{gn}{\alpha}}\at{\alpha = a}} = 
\begin{bmatrix}
1 & 0 & \dots &  & 0 \\
0 & 0 & & & \vdots \\
\vdots & & \ddots & & \\
 & & & 0 & 0\\
0 & \dots & & 0 & e^{ia}
\end{bmatrix} \quad \in \Mat_{\bbC}
\end{align}

Rather than instantiate the value of $\alpha$ before we apply the map,
we instead make the substitution $Y = e^{i\alpha}$ to find a Laurent polynomial matrix interpretation:

\begin{align}
	\interpret{\spider{gn}{\alpha}} = 
\begin{bmatrix}
1 & 0 & \dots &  & 0 \\
0 & 0 & & & \vdots \\
\vdots & & \ddots & & \\
 & & & 0 & 0\\
0 & \dots & & 0 & Y
\end{bmatrix} \quad \in \Mat_{\bbC[Y, Y\inv]}
\end{align}
\end{example}

\subsection{Degree of a matrix}

\begin{definition} Laurent polynomial degrees (defined in the same manner as e.g. \cite{WolframLaurent}:)
\begin{itemize}
 \item The 0 polynomial has degree $- \infty$ by convention
\item the non-zero Laurent polynomial $a_n Y^n + a_{n-1} Y^{n-1} + \dots + a_0 + a_{-1} Y^{-1} + \dots + a_{-m} Y^{-m}$
with $a_n \neq 0$ and $a_{-m} \neq 0$
has \textbf{positive degree} $n \geq 0$ and \textbf{negative degree} $m \geq 0$.
\end{itemize}
Note that we can factorise this Laurent polynomial as $Y^{-m}$ multiplied by a (non-Laurent) polynomial.
\end{definition}

\begin{definition} We define here a notion of matrix and diagram degree:

\begin{itemize}
\item 
The \textbf{$Y_j^+$-degree} of a matrix in $\Mat_{\bbC[Y_1, Y_1\inv, \dots, Y_n, Y_n\inv]}$
is the maximum of the positive $Y_j$-degrees of the entries in that matrix
\item The \textbf{$Y_j^-$-degree} is the maximum of the negative $Y_j$-degrees of 
the entries in that matrix
\item 
The positive degree of a diagram is the positive degree of the matrix interpretation of that diagram
(likewise for negative degrees.)
When clear from context we will refer to the degree of a phase variable $\alpha_j$ in the diagram,
meaning the degree of $Y_j$ in the interpretation.
\end{itemize}

\end{definition}

\begin{example}
Here are example positive and negative degrees for first a polynomial, and then a $2\times2$ matrix of polynomials.
\begin{align}
\maxdegree{+}{Y}{(Y^8+1+Y^{-2})}& = 8 &
\maxdegree{-}{Y}{(Y^8+1+Y^{-2})}& = 2 \\
\nonumber \\
\maxdegree{+}{Y}{\begin{pmatrix}2 & Y^{-3} \\ Y & Y^2 - 2\end{pmatrix}}& = \max\set{0,0,1,2} = 2 &
\maxdegree{-}{Y}{\begin{pmatrix}2 & Y^{-3} \\ Y & Y^2 - 2\end{pmatrix}} & = \max\set{0,3,0,0} = 3
\end{align}
\end{example}

\begin{proposition} \label{propUpperBound}
For two diagrams $\bbD$ and $\bbD'$ we can find an upper bound for the degrees of the horizontal or vertical compositions of $\bbD$ and $\bbD'$,
i.e.:
\begin{align}
  \maxdegree{+}{}{
    (\bbD \comp \bbD')
  }
  &\leq \maxdegree{+}{Y}{\bbD} + \maxdegree{+}{Y}{\bbD'} &
  \maxdegree{+}{}{
    (\bbD \tensor \bbD')
  }
  &\leq \maxdegree{+}{Y}{\bbD} + \maxdegree{+}{Y}{\bbD'} \\
\nonumber \\
  \maxdegree{-}{}{
    (\bbD \comp \bbD')
  }
  &\leq \maxdegree{-}{Y}{\bbD} + \maxdegree{-}{Y}{\bbD'} &
  \maxdegree{-}{}{
    (\bbD \tensor \bbD')
  }
  &\leq \maxdegree{-}{Y}{\bbD} + \maxdegree{-}{Y}{\bbD'}
\end{align}
\end{proposition}

\begin{proof}
We first note that for Laurent polynomials $P$ and $P'$ in $\bbC[Y, Y\inv]$, and for $\lambda \in \bbC$:
\begin{align}
\maxdegree{+}{Y}{(\lambda P)} &\leq \maxdegree{+}{Y}{P} &
\maxdegree{-}{Y}{(\lambda P)} &\leq \maxdegree{-}{Y}{P}\\
\nonumber  \\
\maxdegree{+}{Y}{(P \times P')} & \leq \maxdegree{+}{Y}{P}  + \maxdegree{+}{Y}{P'}  &
\maxdegree{-}{Y}{(P \times P')} & \leq \maxdegree{-}{Y}{P}  + \maxdegree{-}{Y}{P'}  \\
\nonumber  \\
\maxdegree{+}{Y}{(\sum_j{P_j})} & \leq \max_j \maxdegree{+}{Y}{P_j} &
\maxdegree{-}{Y}{(\sum_j{P_j})} & \leq \max_j \maxdegree{-}{Y}{P_j} 
\end{align}

The composition $A \comp B$ or tensor product $A \tensor B$ of matrices produces a new matrix with entries that
are linear combinations of products of the entries of $A$ and $B$.
Therefore:

\begin{align}
\maxdegree{+}{Y}{(M \comp M')} & \leq \maxdegree{+}{Y}{M} + \maxdegree{+}{Y}{M'} &
\maxdegree{+}{Y}{(M \tensor M')} & \leq \maxdegree{+}{Y}{M} + \maxdegree{+}{Y}{M'} \\
\nonumber \\
\maxdegree{-}{Y}{(M \comp M')} & \leq \maxdegree{-}{Y}{M} + \maxdegree{-}{Y}{M'} &
\maxdegree{-}{Y}{(M \tensor M')} & \leq \maxdegree{-}{Y}{M} + \maxdegree{-}{Y}{M'}
\end{align}

Recalling that the degree of a diagram is the degree of its polynomial matrix interpretation
this gives us the result for $\bbD$ and $\bbD'$.
\end{proof}

\subsection{Finite verification}

\begin{restatable}{theorem}{thmparameter}For a diagrammatic equation with phase variables
\label{thmparameter}
\[ \family{\bb{\bbD}_1 = \bb{\bbD}_2}_{\alpha_1, \dots, \alpha_n}\]

that has a Laurent polynomial matrix interpretation,
and the equation is sound at all values of $(\alpha_1, \dots, \alpha_n) = (a_1, \dots, a_n) \in A_1 \times \dots \times A_n$,
where each $\abs{A_j}$ is sufficiently large,
then the equation is sound for all values of $(\alpha_1, \dots, \alpha_n)$. That is:

\begin{align}
& \interpret{\family{\bbD_1}_{\alpha_1, \dots, \alpha_n|\alpha_1 = a_1, \dots, \alpha_n = a_n}} = \nonumber
 \interpret{\family{\bbD_2}_{\alpha_1, \dots, \alpha_n|\alpha_1 = a_1, \dots, \alpha_n = a_n}} \\ 
 & \qquad \qquad \forall a_1 \in A_1, \dots, a_n \in A_n
\\ \nonumber
\implies & \interpret{\family{\bbD_1}_{\alpha_1, \dots, \alpha_n|\alpha_1 = a_1, \dots, \alpha_n = a_n}} =
 \interpret{\family{\bbD_2}_{\alpha_1, \dots, \alpha_n|\alpha_1 = a_1, \dots, \alpha_n = a_n}} \\ \nonumber
 & \qquad \qquad \forall a_1, \dots, a_n
\end{align} 

The necessary size of $\abs{A_j}$ is given by:
 
\begin{align}
  \abs{A_j} =& \max ( \maxdegree{+}{Y_j}{\bbD_1}, \maxdegree{+}{Y_j}{\bbD_2})
  + \max ( \maxdegree{-}{Y_j}{\bbD_1}, \maxdegree{-}{Y_j}{\bbD_2})
  + 1 
\end{align}

``The maximum of the positive degrees, plus the maximum of the negative degress, plus one.''
\end{restatable}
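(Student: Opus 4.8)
The plan is to reduce the claim to an entrywise statement about multivariate Laurent polynomials and then prove that statement by induction on the number of variables.

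First I would pass to the difference $M := \interpret{\bbD_1} - \interpret{\bbD_2}$, a matrix with entries in $\bbC[Y_1, Y_1\inv, \dots, Y_n, Y_n\inv]$. The instantiation map $\interpret{\cdot}\at{\alpha_j = a_j}$ factors through the Laurent interpretation by substituting for each $Y_j$ a scalar $y_j$ determined by $a_j$ — for universal ZX this is $y_j = e^{ia_j}$, always nonzero, and for ZH and ZW the parameters appear polynomially so no negative powers of $Y_j$ occur at all. Hence the hypothesis says precisely that every entry of $M$ vanishes at each point of a grid $B_1 \times \dots \times B_n$ of distinct scalars with $|B_j| = |A_j|$ (these scalars being nonzero exactly in the cases where negative powers are present), and the conclusion we want is that every entry of $M$ is the zero Laurent polynomial: once $\interpret{\bbD_1} = \interpret{\bbD_2}$ as Laurent matrices they agree under every substitution, in particular at every $(a_1, \dots, a_n)$. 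So it suffices to prove that a Laurent polynomial $P$ with $\maxdegree{+}{Y_j}{P} \le N_j^+$ and $\maxdegree{-}{Y_j}{P} \le N_j^-$ for each $j$, vanishing on such a grid with $|B_j| = N_j^+ + N_j^- + 1$, is identically zero, where $N_j^{\pm} := \max(\maxdegree{\pm}{Y_j}{\bbD_1}, \maxdegree{\pm}{Y_j}{\bbD_2})$. The bound $\maxdegree{\pm}{Y_j}{P} \le N_j^{\pm}$ is immediate from the degree rules for scalar multiples and sums recorded in the preceding proposition, since the degree of a difference is at most the maximum of the two degrees.

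Next I would clear denominators by setting $Q := \bigl( \prod_j Y_j^{N_j^-} \bigr) P$, an ordinary polynomial with $\deg_{Y_j} Q \le N_j^+ + N_j^- = |B_j| - 1$. Because every coordinate of every grid point is nonzero whenever $N_j^- > 0$, the monomial $\prod_j y_j^{N_j^-}$ evaluated at a grid point is nonzero, so $Q$ still vanishes on all of $B_1 \times \dots \times B_n$ and $Q = 0$ iff $P = 0$. This reduces the problem to the classical grid-vanishing lemma: if an ordinary polynomial $Q$ satisfies $\deg_{Y_j} Q < |B_j|$ for each $j$ and vanishes on $B_1 \times \dots \times B_n$ with each $B_j$ a set of distinct points, then $Q = 0$. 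I would prove this by induction on $n$. The base case $n = 1$ is the elementary fact that a nonzero univariate polynomial of degree $d$ has at most $d$ roots, so one with more than $d = |B_1| - 1$ roots vanishes. For the inductive step, write $Q = \sum_{k=0}^{d} Q_k(Y_1, \dots, Y_{n-1}) Y_n^k$ with $d = \deg_{Y_n} Q < |B_n|$; fixing $(y_1, \dots, y_{n-1}) \in B_1 \times \dots \times B_{n-1}$ gives a univariate polynomial in $Y_n$ of degree $< |B_n|$ vanishing on all of $B_n$, hence identically zero, so each $Q_k$ vanishes on $B_1 \times \dots \times B_{n-1}$; since $\deg_{Y_i} Q_k \le \deg_{Y_i} Q < |B_i|$ for $i < n$, the induction hypothesis forces $Q_k = 0$ for all $k$, and therefore $Q = 0$.

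I do not expect a genuine obstacle here; the proof is just the assembly of three routine ingredients. The part that needs the most care is the degree bookkeeping — getting the Laurent-to-ordinary conversion exactly right (the clearing monomial must carry exponent precisely $N_j^-$, and one must notice the grid points are nonzero so nothing is lost) and checking that the $Y_j$-degree of the entrywise difference is bounded by the \emph{maximum} of the two sides rather than their sum, which is exactly what keeps $|A_j|$ down to the stated $N_j^+ + N_j^- + 1$. One should also make explicit that instantiation really does factor as substitution of a scalar for $Y_j$ (nonzero when negative powers are in play), so that distinct instantiation values yield distinct grid points; for universal ZX these are $e^{ia_j}$ with the $a_j$ chosen distinct in $[0, 2\pi)$, and for ZH and ZW the correspondence is more direct still.
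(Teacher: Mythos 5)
Your proposal is correct, and its overall skeleton matches the paper's: both arguments reduce the theorem to the entrywise statement that the Laurent-polynomial difference of the two interpretations vanishes identically, both clear negative powers by multiplying through by a monomial $\prod_j Y_j^{N_j^-}$ with exponents equal to the maximal negative degrees, and both use the same degree bookkeeping (maximum of the two sides, not the sum) to arrive at $\abs{A_j} = N_j^+ + N_j^- + 1$. Where you genuinely diverge is in how the grid-vanishing lemma is proved. The paper evaluates each polynomial entry at every grid point, views the result as a linear system on the coefficient vector, and observes that the evaluation matrix factors as a Kronecker product $V_1 \tensor \dots \tensor V_n$ of Vandermonde matrices with distinct nodes, hence is invertible, forcing all coefficients to zero. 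You instead prove the lemma by induction on the number of variables, peeling off one variable at a time via the univariate root-count bound. Each route has its merits: your induction is more elementary and self-contained (no determinants or tensor products of matrices), while the Vandermonde formulation makes the role of the \emph{regular grid} and the exact number of required evaluation points explicit as an invertible linear system. One point where you are actually more careful than the paper: you state explicitly that instantiation factors through substituting a scalar for $Y_j$ (e.g.\ $e^{ia_j}$ in ZX, which is never zero), so distinct phase values give distinct nonzero evaluation points and multiplying by the clearing monomial does not destroy the vanishing hypothesis; the paper passes over this silently when it evaluates its polynomials at the grid.
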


The proof can be found in appendix \ref{secproofparameter}. A sketch of it is as follows:
\begin{itemize}
	\item Manipulate the equation $\interpret{\bbD} = \interpret{\bbD'}$ into an equation of the form $M = 0$,
	where $M$ is a matrix of (non-Laurent) polynomials.
	\item Perform multivariate polynomial interpolation element-wise on $M$.
	\item In doing this interpolation we need to know the degrees of the polynomials in $M$,
	which we calculate using Proposition \ref{propUpperBound}
\end{itemize}

\begin{example}
Finding the sizes of the $A_j$: The following (universal) ZX diagram contains no !-boxes and two phase variables.
\begin{align}

\InputIfFileExists{param-example-zx}{}{\input{./tikz/param-example-zx.tikz}}
 & \qquad \label{eqnVerSpider}
\begin{matrix}
\deg^+_{\alpha_1}\bb{D}_1 = 1 &
\deg^+_{\alpha_2}\bb{D}_1 = 1 \\
\deg^-_{\alpha_1}\bb{D}_1 = 0 &
\deg^-_{\alpha_2}\bb{D}_1 = 0 \\
\end{matrix}
\qquad \begin{matrix}
\deg^+_{\alpha_1}\bb{D}_2 = 1 &
\deg^+_{\alpha_2}\bb{D}_2 = 1 \\
\deg^-_{\alpha_1}\bb{D}_2 = 0 &
\deg^-_{\alpha_2}\bb{D}_2 = 0 \\
\end{matrix}
\end{align}

We should therefore construct $A_1$ and $A_2$ such that
$\abs{A_1} =\max\set{1,1} + \max\set{0,0}+1$ and  $\abs{A_2} = \max\set{1,1} + \max\set{0,0}+1$.
By picking $A_1 = A_2 = \set{0 , \pi}$
we therefore know that we can verify this parameterised family of diagram equations for all values
of $\alpha_1$ and $\alpha_2$ by verifying this equation on the following grid of values:

\begin{align}
\begin{matrix}
& \alpha_1 = 0 & \alpha_1 = \pi \\
\alpha_2 = 0 & (0,0) & (0,\pi) \\
\alpha_2 = \pi & (\pi, 0) & (\pi, \pi)
\end{matrix}
\end{align}

i.e. by verifying the four equations:
\begin{align}
\set{\zxspider{0}{0}, \quad \zxspider{0}{\pi},\quad \zxspider{\pi}{0},\quad \zxspider{\pi}{\pi}}
\end{align}
we can assert that the diagram equation in \eqref{eqnVerSpider} is sound for all values of $\alpha_1$ and $\alpha_2$ in $[0, 2\pi)$.

\end{example}

\begin{corollary} \label{corZXPhases}
In the universal ZX calculus it suffices to check $(\alpha_1, \dots, \alpha_n) \in A_1 \times \dots \times A_n$
to prove an equation parameterised by $\alpha_j$, where the $A_j$ are sets of distinct angles
with 
\begin{align}
	\abs{A_j} = & \max
	\set{ \maxdegree{+}{\alpha_j}{\bbD_1},\maxdegree{+}{\alpha_j}{\bbD_2} }
	+ \max \set{ \maxdegree{-}{\alpha_j}{\bbD_1}, \maxdegree{-}{\alpha_j}{\bbD_2} }
	+1
\end{align}

\end{corollary}

\begin{remark} \label{remVilmart}
Corollary \ref{corZXPhases} was first proved in \cite[Theorem 3]{Vilmart18}.
The authors use the symbol $\mu$ to count appearances of $\alpha_j$ (with coefficient,)
and $T_j$ to denote a large enough set of values.
Their method does not use Laurent polynomials, instead examining ranks of certain matrices,
but this also means their method does not extend to ZH and ZW.

The result of \cite{Vilmart18} is in fact stronger than that of Theorem \ref{thmparameter};
as they show that under the conditions given here there is a \emph{universal} proof of the parameterised equation,
something that the method given in Theorem \ref{thmparameter} does not show.
For the purposes of verification, however, it is not important how an equation is derivable,
but only whether the equation is sound.
\end{remark}

\begin{example}

Note that the ZX result required the variables to be linear,
but for ZH and ZW this result applies to diagrams whose phases are polynomial
(or even Laurent polynomial) in the $\alpha_j$.
For example we can verify the following ZH equation by checking 3 distinct values of $\alpha$:
\begin{align}

\InputIfFileExists{ZH4}{}{\input{./tikz/ZH4.tikz}}

\end{align}

\end{example}

\begin{example}
\label{excliffordt}
Consider a Clifford+T ZX diagram that contains at least 8 nodes labeled by a positive $\alpha$.
Our theorem says that for any equation containing this diagram it suffices to try at least 9 distinct values of $\alpha$,
but this is impossible since there are only 8 distinct values of $\alpha$ available in Clifford+T.

This is because our Laurent polynomial matrix interpretation
needs to be viewed not in $\bb{C}[Y, Y\inv]$ but in $\bb{C}[Y] / (Y^8-1)$,
reflecting the property $8 \times \alpha = 0$ in our phase group.
All polynomials in $\bb{C}[Y] / (Y^8-1)$ have degree at most 7,
and so it is never necessary to check more than 8 points.
\end{example}

\section{Verifying !-boxes}

We now turn our attention to !-boxes.
The aim of this section is to show that
if an equation holds for $0 , \dots, N$ !-box instances,
then it continues to hold for any number of instances.
It then follows that one only needs to check the first $0, \dots, N$ instances
in order to verify the entire family.
The method relies on the finite dimensionality of what we call the \emph{join}
between the inside and outside of a !-box.
Sometimes this join is not finite dimensional,
and so we examine a property called \emph{separated}
which captures when the method will work.
Before we can get to that we first make clear how the nesting of !-boxes and phase variables can work.

\subsection{Children, copies, and the nesting order}
\label{secnesting}
We begin with some definitions for describing the effect of nesting !-boxes
inside a parameterised family of equations.
There is a choice to be made when nesting parameters:

\begin{definition} \label{defchild}
When a !-box creates new instances of a nested parameter
we \textbf{copy} that variable name,
so that all instances are linked by the same name (the approach taken in \cite[\S 4.4.2]{MerryThesis},)
we could alternatively create new names, each of which is referred to as a \textbf{child}
of the original parameter name (an approach requested by at least one user of Quantomatic.)
When we create child parameters we record the name of its parent,
so we can always tell the original ancestor of a parameter.
\end{definition}

Rather than pick one option over the other we will demonstrate our results for both choices.
In order to talk about nesting formally we introduce the following definition:

\begin{definition}

We define a partial order (which we call the \textbf{nesting order}) on !-boxes in a diagram:
\begin{align*}
\delta_1 < \delta_2 & \quad \text{ if $\delta_1$ is inside $\delta_2$} \\
\end{align*}

And use this partial order to draw a nesting diagram.
For example this (universal) ZX diagram:

\begin{align*}

\InputIfFileExists{net-example-l}{}{\input{./tikz/net-example-l.tikz}}
 \quad \text{ has nesting diagram } \quad 
\InputIfFileExists{net-example-r}{}{\input{./tikz/net-example-r.tikz}}
 
\end{align*}

\end{definition}

\begin{definition}
We say an equation is \textbf{well nested} if the nesting diagrams corresponding to
the left and right hand sides of the equation are identical.
\end{definition}

\begin{definition}  \label{defJoin}
The \textbf{join} of a !-box is the collection of wires that leave that !-box,
i.e. the edges linking a vertex inside $\delta$ to one outside $\delta$.
The \textbf{size} of a join is the number of wires, and
the \textbf{dimension} of a join is the dimension of the diagram formed by just the wires of that join,
i.e. $\dim \interpret{ \;
\InputIfFileExists{wire}{}{\input{./tikz/wire.tikz}}
\;^{\tensor n} }$ where $n$ is the size of the join.
This is equivalently $(\dim \interpret{ \;
\InputIfFileExists{wire}{}{\input{./tikz/wire.tikz}}
\; })^n$.
\end{definition}

\begin{definition}  \label{defSeparated}

We describe a pair of !-boxes as \textbf{separated} if either:
\begin{itemize}
\item They are nested, or
\item There is no edge joining a vertex in one to a vertex in the other
\end{itemize}

Note that in many of the languages considered there are elementary operations one can do
that will change a diagram that is not separated into one that is.
See Section~\ref{secSeparability} for more details.

\end{definition}

\subsection{Verification}

\begin{definition}
Given a parameterised equation $\bbE$ we say that $\set{\bbE_1, \dots, \bbE_n}$ \textbf{verifies} $\bbE$ if:
\[
	\forall j \;(\bbE_j \text{ is sound}) \implies \bbE \issound
\]
\end{definition}

\begin{restatable}{theorem}{thmbboxx}\label{thmbbox2}
Given a family $\bbE = \family{\bb{D}_1 = \bb{D}_2}_{\alphadelta}$ of diagrammatic equations,
parameterised by a !-box $\delta_1$ 
where $\delta_1$ is separated from all other !-boxes
and is nested in no other !-box;
then $\bbE$ is verified by the finite family $\set{\bbE|_{\delta_1 = 0}, \dots, \bbE|_{\delta_1 = N}}$
where $N$ is the dimension of the join of $\delta_1$ in $\bb{D}_1$
plus the dimension of the join of $\delta_1$ in $\bb{D}_2$.
That is:
\begin{align}
n_1 :=& \text{ join of $\delta_1$ in $\bbD_1$} \nonumber \\
n_2 :=& \text{ join of $\delta_1$ in $\bbD_2$} \nonumber \\
N :=& \dim \left( \Hilbert^{\tensor n_1} \oplus \Hilbert^{\tensor n_2} \right) \nonumber \\[2em]
& \interpret{\family{\bb{D}_1}_{\alphadelta | \alpha_1 = a_1 , \dots, \delta_1 = d_1, \dots \delta_m = d_m}} \nonumber \\
= &
\interpret{\family{\bb{D}_2}_{\alphadelta | \alpha_1 = a_1 , \dots, \delta_1 = d_1, \dots \delta_m = d_m}} 
& \forall d_1 \leq N \\[2em]
\implies
& \interpret{\family{\bb{D}_1}_{\alphadelta | \alpha_1 = a_1 , \dots, \delta_1 = d_1, \dots \delta_m = d_m}}  \nonumber \\
= &
\interpret{\family{\bb{D}_2}_{\alphadelta | \alpha_1 = a_1 , \dots, \delta_1 = d_1, \dots \delta_m = d_m}} 
& \forall d_1
\end{align}
\end{restatable}

The proof is presented in appendix \ref{secproofbbox2}.
The essence of the proof is to change the presentation of the !-box instances
into one that looks like the vertical composition of repeated elements.
From there some elementary (albeit fiddly) properties of finite dimensional vector spaces are applied
to achieve the result.
Note that the the resulting finite family of equations need not be simple,
and that if $\delta_1$ were not separated from the other !-boxes then we could not put a bound on $N$.

\begin{example}
Consider the following (universal) ZX family of equations, parameterised by a single \mbox{!-box}:
\begin{align}
\bbE : =\family{
\InputIfFileExists{internal-rule}{}{\input{./tikz/internal-rule.tikz}}
}_{\delta}
\end{align}
The join between the !-box and the rest of the diagram on the left is two wires (dimension $2^2$),
and on the right is one wire (dimension $2^1$.)
These sum to have dimension 6, and we therefore need only to check the !-box instances
($0, \dots, 6$)
to be sure that the equation $\bbE$ is sound.
\end{example}

\section{Interacting Parameters}
\label{sectogether}

Theorem \ref{thmparameter} and theorem \ref{thmbbox2} deal with equations containing
multiple phase variables and nested !-boxes respectively.
This section will put together the necessary results such that we can combine these approaches to deal with
equations containing multiple !-boxes and phase variables, any of which could potentially be nested inside other !-boxes.

\begin{theorem} \label{thmboth}
Given an equation\ $\bbE$\ and ways of finding
\begin{itemize}
	\item finite verifying sets $A_j$ for the $\alpha_j$ (Theorem~\ref{thmparameter})
	\item finite verifying sets $D_k$ for the $\delta_k$ (Theorem~\ref{thmbbox2})
\end{itemize}
then we may verify the entire family $\bbE$ by checking the (finite)
set given by the cartesian product of all the $A_j$ and $D_k$.
\end{theorem}

\begin{proof}
We define:
\begin{align}
\bar D & := D_1 \times D_2  \times \dots \times D_m  \\
\bar \delta & := (\delta_1, \dots, \delta_m)  \\
A_j(\bar \delta) & := \text{The verifying set for } \alpha_j
\text { once } E \text{ has had !-boxes instantiated at } \bar \delta
\end{align}
Construct $A_j$ by choosing as many points as there are in $\max_{\bar \delta}\set{ \abs{A_j(\bar \delta)}}$.
This is finite because the $D_i$ and the $A_j(\bar \delta)$ are finite.
$A_j$ therefore contains enough points to be a verifying set for $A_j(\bar \delta) \quad \forall \bar \delta \in \bar D$.
We now show that $A_1 \times \dots \times A_n \times D_1 \times \dots \times D_m$
describes a verifying set for the parameterised equation $\bbE$:
\begin{align}
&\bbE_{\alphadelta | \forall i, j \; \alpha_j \in A_j ,\, \delta_k \in D_k} & \issound\\
= \quad & \bbE_{\alphadelta | \forall i \; \alpha_j \in A_j ,\, \bar \delta \in \bar D} & \issound
& \quad \text{(rewrite using }\bar \delta \text{ notation})\\
\implies &  \bbE_{\alphadelta | \forall i \; \alpha_j \in A_j(\bar \delta), \, \bar \delta \in \bar D} & \issound &\quad  \text{(construction of $A_j$)}  \\
\implies &  \bbE_{\alphadelta | \bar \delta \in \bar D}  & \issound & \quad \text{(theorem \ref{thmparameter})} \\
\implies & \bbE_{\alphadelta}  & \issound &\quad \text{(theorem \ref{thmbbox2})} 
\end{align}

\end{proof}

\begin{restatable}{theorem}{thmFinite}\label{thmfinite}
Given a parameterised family of equations $\bbE$,
where the !-boxes are separated and well nested,
we can construct a finite set of simple equations $\set{\bbE_\kappa}_{\kappa \in K}$,
such that:
\begin{align}
\set{\bbE_\kappa}_{\kappa \in K} \quad \issound & \implies \bbE \quad \issound
\end{align}

\end{restatable}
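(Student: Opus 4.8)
The plan is to combine the three previous theorems by induction on the nesting order of the !-boxes. First I would establish the base data: since the equation $\bbE$ is well nested, each !-box $\delta_k$ that is maximal in the nesting order is separated from all other maximal !-boxes and nested in none, so Theorem~\ref{thmbbox2} applies directly to it and yields a finite verifying set $D_k = \{0, \dots, N_k\}$ of instantiation values. For a !-box $\delta_k$ that is not maximal, instantiating all the !-boxes above it (at each fixed choice of their values) produces a new equation in which $\delta_k$ — possibly now replaced by several children, or copied — becomes maximal; applying Theorem~\ref{thmbbox2} to each such instantiated equation and taking the union / maximum over the finitely many choices of ancestor values gives a finite verifying set for $\delta_k$ in context. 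Working from the top of the nesting order downward, I would iterate this to obtain a single finite set $\bar D = D_1 \times \dots \times D_m$ of combined !-box instantiations such that checking $\bbE$ on all of $\bar D$ suffices to verify $\bbE$ for all !-box values (this is essentially an iterated application of Theorem~\ref{thmbbox2}, exactly as the last implication in the proof of Theorem~\ref{thmboth}).

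Next I would handle the phase variables. After instantiating the !-boxes at any fixed $\bar\delta \in \bar D$, the resulting equation has no !-boxes (it may have acquired child phase variables, but only finitely many, and each is still a genuine phase variable with a Laurent polynomial interpretation), so Theorem~\ref{thmparameter} gives a finite verifying grid $A_1(\bar\delta) \times \dots \times A_{n'}(\bar\delta)$ for its phase variables. Taking, for each original phase variable $\alpha_j$ (tracking children back to their ancestor as in Definition~\ref{defchild}), a set $A_j$ large enough to dominate $\max_{\bar\delta}|A_j(\bar\delta)|$ — finite since $\bar D$ is finite — I get, via Theorem~\ref{thmboth}, that $\bbE$ is verified by the single finite cartesian product $\prod_j A_j \times \prod_k D_k$. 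Instantiating $\bbE$ at every point of this product yields a finite family $\{\bbE_\kappa\}_{\kappa \in K}$ of \emph{simple} equations (every parameter has been given an explicit value), and by construction their holding implies $\bbE$ holds; chaining the implications of Theorems~\ref{thmparameter}, \ref{thmbbox2}, and \ref{thmboth} in the order just described gives the claim.

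The main obstacle is the bookkeeping around nesting: when a !-box $\delta_k$ is unpacked, the phase variables and !-boxes living strictly inside it are duplicated, and under the ``child'' convention they acquire fresh names whereas under the ``copy'' convention they are identified. I would need to check that in either case (i) the number of resulting parameters stays finite for each fixed $\bar\delta$, so that Theorems~\ref{thmparameter} and \ref{thmbbox2} genuinely apply, and (ii) the separation and well-nesting hypotheses are inherited by the instantiated equations, so that the inductive step on the nesting order goes through — in particular that a !-box which was separated from $\delta_k$ remains separated from each child/copy of a !-box that was inside $\delta_k$. Well-nestedness of $\bbE$ is exactly what guarantees the left and right sides unpack ``in parallel'' so that this matching of child structures on the two sides is consistent; I would lean on it at each stage. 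Once that is in place, the rest is a direct assembly of the earlier results, and the only quantitative content is that all the sets produced are finite, which is immediate since each is a finite union of finite sets indexed by a finite product.
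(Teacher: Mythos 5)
Your proposal is correct and takes essentially the same route as the paper: the paper's proof likewise removes !-boxes iteratively from the outermost inward (its ``!-Remove'' steps, applying Theorem~\ref{thmbbox2} equation by equation under either the copy or child convention), and then removes phase variables via Theorems~\ref{thmparameter} and \ref{thmboth}, sizing the sets $A_j$ from the equation obtained by instantiating every !-box at its largest required value. The bookkeeping issues you flag (finiteness at each stage, preservation of separation and of the nesting structure under instantiation) are exactly the claims the paper records alongside its !-Remove and $\alpha$-Remove steps.
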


The proof can be found in appendix \ref{secfinite}.
The idea of the proof is to iteratively remove dependencies on !-boxes via theorem \ref{thmbbox2},
each time generating a larger set of verifying equations.
Once we have removed all !-box dependence we then use theorem \ref{thmboth}
to remove phase variable dependence;
using the ``largest'' equation in $\set{\bbE_\kappa}$ to determine the sizes of the $A_j$.
We argue by the finite nature of all the parts involved that this process terminates.

We did not specify in the statement of our theorem which method of !-box expansion we were following
(see Definition~\ref{defchild})
and indeed both methods work and are covered in the proof.

\section{Conclusions}

We have shown how to construct finite sets of equations that verify certain classes
of infinite families of equations.
To our knowledge this is the first algorithm that can verify equations containing !-boxes,
and the first analysis of such algorithms for ZW and ZH.
This paves the way for proof assistants to verify parameterised theorems.
Further work would include combining this work with conjecture synthesis,
so that a computer could generate and verify parameterised hypotheses.
One could also implement such methods into proof assistants,
such as Quantomatic \cite{Quantomatic},
so that the verifying set could be generated (and ideally checked) automatically.

One final avenue is to develop methods to deal with the side conditions of the (EU) rule of the ZX calculus (\cite{VilmartZX},)
either by extending these results or finding a presentation of the ZX calculus that does not require side conditions.
It should be noted that \cite{Vilmart18} gives some evidence that such a presentation may not exist.

\subsection{Acknowledgements}

The author would like to thank
their supervisors Aleks Kissinger and Bob Coecke as well as Miriam Backens for their help in writing this paper,
and the EPSRC for providing the funding grant.

\bibliographystyle{eptcs}
\bibliography{references}

\appendix

\newpage

\section{Proofs}

We show here any proofs that were not included in the main text.

\subsection{Proof of theorem \ref{thmparameter}}

\thmparameter*

\label{secproofparameter}

\begin{proof}

We are seeking the multivariate complex polynomials that populate the matrices $\interpret{\bb{D}_1}$ and $\interpret{\bb{D}_2}$.
We begin by combining the two matrices of Laurent polynomials into one matrix of (not-Laurent) polynomials and a scale factor of the form $Y_1^{m_1} \dots Y_n^{m_n}$.

\begin{itemize}

\item We define:
\begin{align}
M_1 :=  \interpret{\bb{D}_1} \quad \quad
M_2 :=  \interpret{\bb{D}_2} 
\end{align}
and wish to show $M_1 = M_2$.

\item First we pull enough copies of $Y_1\inv , \dots, Y_n\inv$ out of each side so that we have an equation of the form:
\begin{align}
 M_1' \prod_j (Y_j\inv)^{\maxdegree{-}{Y_j}{M_1}} =  M_2'  \prod_j (Y_j\inv)^{\maxdegree{-}{Y_j}{M_2}}
\end{align}
Where $M_1'$ and $M_2'$ are matrices of (not-Laurent) polynomials.

\item Let $m_j := \max(\maxdegree{-}{Y_j}{M_1}, \maxdegree{-}{Y_j}{M_2})$
and multiply both sides by $\prod_j Y_j^{m_j}$ to clear any negative powers of $Y_j$.
\begin{align}
	M_1' \prod_j Y^{m_j - \maxdegree{-}{Y_j}{M_1}}  =  M_2' \prod_j Y^{m_j - \maxdegree{-}{Y_j}{M_2}}
\end{align}

\item Then subtract the right hand side from the left:
\begin{align}
	&M_1' \prod_j Y_j^{m_j - \maxdegree{-}{Y_j}{M_1}}  -  M_2' \prod_j Y^{m_j - \maxdegree{-}{Y_j}{M_2}} = 0  \\
	\bb{M} :=&  M_1' \prod_j Y^{m_j - \maxdegree{-}{Y_j}{M_1}}  -  M_2' \prod_j Y^{m_j - \maxdegree{-}{Y_j}{M_2}}
\end{align}

Note that $\bb{M}$ is a matrix of (not-Laurent) polynomials.
The statement $\bb{M} = 0$ can be viewed as a stating that each entry of $\bb{M}$ is equal to the 0 polynomial.

\item We will use the notation $\maxdegree{}{Y_j}{\cdot}$ for the degree of a (not-Laurent) polynomial,
or matrix of polynomials.
We could continue to use the term positive degree, the definitions coincide, but want to make it clear that
we are no longer in a Laurent polynomial setting.

\item We wish to find a bound for the maximum degree of any polynomial in $\bb{M}$:
\begin{align}
 \maxdegree{}{Y_j}{\bb{M}} = 
 & \maxdegree{}{Y_j}{ M_1' \prod_j Y^{m_j - \maxdegree{-}{Y_j}{M_1}}  -  M_2' \prod_j Y^{m_j - \maxdegree{-}{Y_j}{M_2}}} \\
 \leq & \max ( \maxdegree{}{Y_j}{M_1'} + m_j - \maxdegree{-}{Y_j}{M_1},
  \maxdegree{}{Y_j}{M_2'} + m_j - \maxdegree{-}{Y_j}{M_2} ) \\
 = & \max ( \maxdegree{+}{Y_j}{M_1} + \maxdegree{-}{Y_j}{M_1} + m_j - \maxdegree{-}{Y_j}{M_1},
  \nonumber \\ & \qquad \maxdegree{+}{Y_j}{M_2} +  \maxdegree{-}{Y_j}{M_2} + m_j - \maxdegree{-}{Y_j}{M_2} )\\
= & \max ( \maxdegree{+}{Y_j}{M_1} + m_j  , \maxdegree{+}{Y_j}{M_2} + m_j )\\
 = & \max ( \maxdegree{+}{Y_j}{M_1}, \maxdegree{+}{Y_j}{M_2}) + m_j\\
\label{eqnparam-deg} = & \max ( \maxdegree{+}{Y_j}{M_1}, \maxdegree{+}{Y_j}{M_2}) +
 \max ( \maxdegree{-}{Y_j}{M_1}, \maxdegree{-}{Y_j}{M_2})
\end{align}

\item Suppose we know that our diagram equation
is sound for parameter choices in a large enough \emph{regular grid} of values.
(A technique that appears to date to before the 20th century, according to \cite{multivariate}.)
\begin{align}
&\interpret{\family{\bb{D}_1}_{\alpha_1, \dots, \alpha_n|\alpha_1 = a_1, \dots, \alpha_n=a_n}} =
\interpret{\family{\bb{D}_2}_{\alpha_1, \dots, \alpha_n|\alpha_1 = a_1, \dots, \alpha_n=a_n}}  \\
\nonumber\text{for}\quad& (a_1, \dots, a_n) \in A_1 \times \dots \times A_n &\\
\nonumber\text{where}\quad & \abs{A_j} = \deg_{Y_j}(\bb{M}) +1 &\\
\nonumber A_j \quad & := \set{a_{j,0}\;, \dots,\; a_{j,\deg{Y_j}}}
\end{align}

By picking a polynomial entry $P$ of $\bb{M}$, expressing $P$ using the multi-index $\beta$ as
$P = \sum_{\beta} c_\beta Y_j^\beta$,
and then evaluating $P$ at every point in $A_1 \times \dots \times A_n$
we construct the system of equations:
\begin{align}
\begin{bmatrix}
c_{0, \dots,0}a^{0, \dots,0}_{0, \dots, 0} & c_{1, \dots,0}a^{1, \dots,0}_{0, \dots, 0} &  \dots &c_{\deg{Y_1}, \dots, \deg{Y_n}}a^{\deg{Y_1}, \dots,\deg{Y_n}}_{0, \dots, 0} \\
\vdots & \vdots & \ddots & \vdots \\
c_{0, \dots,0}a^{0, \dots,0}_{\abs{A_1}, \dots, \abs{A_n}} & c_{1, \dots,0}a^{1, \dots,0}_{\abs{A_1}, \dots, \abs{A_n}} &  \dots &c_{\deg{Y_1}, \dots, \deg{Y_n}}a^{\deg{Y_1}, \dots,\deg{Y_n}}_{\abs{A_1}, \dots, \abs{A_n}} \\
\end{bmatrix}
=
\begin{bmatrix}
0 \\ \vdots \\ 0
\end{bmatrix}
\end{align}

Which we view as:
\begin{align}
\begin{bmatrix}
V
\end{bmatrix}
\begin{bmatrix}
c_{0, \dots, 0} \\ \vdots \\ c_{\deg{Y_1}, \dots, \deg{Y_n}}
\end{bmatrix}
=
\begin{bmatrix}
0 \\ \vdots \\ 0
\end{bmatrix}
\end{align}

Where $V$ contains all the products $a_1^{\beta_1} \times \dots \times a_n^{\beta_n}$,
 $\beta$ ranging from $(0, \dots, 0)$ to $(\deg{Y_1}, \dots, \deg{Y_n})$.
Thankfully $V$ decomposes as:

\begin{align}
V = & V_1 \tensor \dots \tensor V_n \\
V_j = &
\begin{bmatrix}
a_{j,0}^0 & a_{j,0}^1 & \dots & a_{j,0}^{\deg{Y_j}} \\
a_{j,1}^0 & a_{j,1}^1 & \dots & a_{j,1}^{\deg{Y_j}} \\
\vdots & \vdots & & \vdots \\
a_{j,\deg{Y_j}}^0 & a_{j,\deg{Y_j}}^1 & \dots & a_{j,\deg{Y_j}}^{\deg{Y_j}} \\
\end{bmatrix}
\end{align}

\item Since $\det(A \tensor B) \neq 0$ if and only if $\det(A) \neq 0$ and $\det(B) \neq 0$,
and since $\det(V_j) \neq 0$ because each $V_j$ is a Vandermonde matrix,
we know that $\det(V) \neq 0$.
Since $V$ is therefore invertible we know that all the coefficiencts $c_\beta$ must be 0,
and therefore $P$ is the 0 polynomial.

\item In the presence of a regular grid on which $\bb{D}_1$ and $\bb{D}_2$ agree we know:

\begin{align}
&&\interpret{\family{\bb{D}_1}_{\alpha_1, \dots, \alpha_n|\alpha_1 \in A_1, \dots, \alpha_n \in A_n}}
&= \interpret{\family{\bb{D}_2}_{\alpha_1, \dots, \alpha_n|\alpha_1 \in A_1, \dots, \alpha_n \in A_n}} \\
&\implies&P &= 0 \quad \text{ for any entry } P \text{ of } M \\
&\implies&\bb{M} &= 0 \\
&\implies& M_1' \prod_j Y^{m_j - \maxdegree{-}{Y_j}{M_1}}  -  M_2' \prod_j Y^{m_j - \maxdegree{-}{Y_j}{M_2}} &=0 \\
&\implies&  M_1' \prod_j Y^{m_j - \maxdegree{-}{Y_j}{M_1}}  &= M_2' \prod_j Y^{m_j - \maxdegree{-}{Y_j}{M_2}}  \\
&\implies& M_1 &= M_2\\
&\implies&\interpret{\family{\bb{D}_1}_{\alpha_1, \dots, \alpha_n}} &= \interpret{\family{\bb{D}_2}_{\alpha_1, \dots, \alpha_n}}
\end{align}

\item By setting $d_j$ to $\maxdegree{}{Y_j}{\bb{M}} + 1$
(which we can calculate using equation (\ref{eqnparam-deg})), and $\abs{A_j} = d_j$ we attain our result;
that if we know that the interpretations of the families of diagrams agree on the regular grid described by the sizes $d_j$
then the interpretations agree on all points in the phase group.

\end{itemize}
\end{proof}

\subsection{Proof of theorem \ref{thmbbox2}}
\label{secproofbbox2}

\thmbboxx*

The idea of the proof is:
\begin{enumerate}
\item Manipulate the diagrams into what we call \textbf{series !-box form}
\item Move to the matrix interpretation
\item Manipulate the equation between two matrices into an expression on a single vector space of dimension $N$
\item Demonstrate the required property as a condition on subspaces
\end{enumerate}

We will require the ``only topology matters'' meta-rule for our diagrams,
suitable spider laws, a matrix interpretation,
and finite dimensionality of $\Hilbert$.

\begin{proof}

We begin by showing series !-box form on a diagram containing a single !-box:

\begin{align} \family{\bb{D}}_{\delta} \end{align}

First we will manipulate the diagram (via ``only topology matters'') until it is in the following form:

\begin{align} \bb{D} \quad = \quad  
\InputIfFileExists{d_and_b}{}{\input{./tikz/d_and_b.tikz}}
 \end{align}

We note that the nodes inside $G$ that join with $B$ must be spiders;
since there can be arbitrary many instances of $B$ the connecting node in $G$ must be able to have arbitrary arity.
There may also be $p$ boundaries that are internal to $B$ (and therefore $\delta$,)
which we will deal with momentarily.
We will now rely on the existence of a spider law such that we may do the following:
\begin{align}

\InputIfFileExists{gen-spider-l}{}{\input{./tikz/gen-spider-l.tikz}}
 \quad = \quad 
\InputIfFileExists{gen-spider-r}{}{\input{./tikz/gen-spider-r.tikz}}

	\end{align}

Which is the ability to ``spread'' a spider with $n$ outputs into $n$
repeated copies of a spider with 1 output, with suitable initial, terminal, and joining subdiagrams.
We do this so that each instance of the !-box is connected to its own copy of the spider,
and these copies are joined in sequence.

This is possible in ZX, ZH and ZW.
To give a ZX example (where the spider law is simple:)

\begin{align}
\family{
\InputIfFileExists{internal-spider-join}{}{\input{./tikz/internal-spider-join.tikz}}
}_{\delta|\delta=d}\quad = \quad 
\InputIfFileExists{bbox-sideways}{}{\input{./tikz/bbox-sideways.tikz}}

\end{align}

Note that although we have only used one example node and joining wire we can perform this action
on all nodes and joining wires.
Where two wires travel from the !-box to the same spider inside $G$
we first spread out that spider so each wire from the !-box connects to a different spider in $G$.
Here is an example for $n$ wires between $B$ and $G$, and $p$ boundaries inside $\delta$
(which we stretch down to be below each copy of $B$ in this representation, just for visibility.)

\begin{align}\label{eqnseries-bbox}

\InputIfFileExists{bbox-sideways-compact-external}{}{\input{./tikz/bbox-sideways-compact-external.tikz}}

\end{align}

From here it is easy to see that we have the diagram $G : m \to n$,
beside $d$ copies of a diagram we call $B : p + n \to n$ (containing the $p$ boundary nodes and the new connecting spiders),
and finally an ending diagram $C: n \to 0$.
We call this the \textbf{series !-box form}.

\begin{definition}
\textbf{Series !-box form} for a given (non-nested, separated) !-box $\delta_1 = d$ in a diagram
is a presentation (as in equation \ref{eqnseries-bbox}) of each the $\delta_1$-instantiated diagrams as
\begin{align*}
C &:= \text{ the end cap of spiders} \\
B &:= \text{ repeated element, which may contain } \alpha_1, \dots, \alpha_n, \delta_2, \dots, \delta_m,
 \text{ and some boundary nodes} \\
G &:= \text{ the rest of the diagram outsie of B, which may contain } \alpha_1, \dots, \alpha_n, \delta_2, \dots, \delta_m,
\\& \qquad \text{ and some boundary nodes}\\
\end{align*}
Such that the $d$ instances of $\delta_1$ are spread out as $d$ instances of $B$.
In the case where you consider a !-box to create child instances of parameters
then $B$ will contain children of the $\alpha_j$ and $\delta_{k>1}$,
rather than copies.
\end{definition}

\textbf{Claim: \;} In the languages ZX, ZW and ZH, 
for any diagram $\bbD$ and any value of $d$ we can put $\bbD \at{\alphadelta | \delta_1=d}$ into series !-box form
as in the example above.

Proof of claim: Inspection of the spider rules in each language.

\vspace{2em}

We will just use the variable names $\alphadelta$ here
and assume we are copying parameters,
but the technique is identical for when one is creating child parameters
and we will point out the different intricacies along the way.
We aim to show

\begin{align} \label{eqnbbox2}
\nonumber & \interpret{\family{\bb{D}_1}_{\alphadelta | \alpha_1 = a_1 , \dots, \delta_1 = d_1, \dots \delta_m = d_m}} \\
\nonumber = &
\nonumber \interpret{\family{\bb{D}_2}_{\alphadelta | \alpha_1 = a_1 , \dots, \delta_1 = d_1, \dots \delta_m = d_m}} 
& \forall d_1 \leq N \\ \nonumber \\
\implies
& \interpret{\family{\bb{D}_1}_{\alphadelta | \alpha_1 = a_1 , \dots, \delta_1 = d_1, \dots \delta_m = d_m}} \\
\nonumber = &
\interpret{\family{\bb{D}_2}_{\alphadelta | \alpha_1 = a_1 , \dots, \delta_1 = d_1, \dots \delta_m = d_m}} 
& \forall d_1 \\ \nonumber \\
\nonumber N :=\; & \dim \left( \Hilbert^{\tensor n_1} \oplus \Hilbert^{\tensor n_2} \right) \\
\nonumber n_1 :=\; & \text{ join of $\delta_1$ in $\bb{D}_1$} \\
\nonumber n_2 :=\; & \text{ join of $\delta_1$ in $\bb{D}_2$}
\end{align}

We would like to move directly to the matrix interpretation of diagram (\ref{eqnseries-bbox}),
but we have the following problems:
\begin{itemize}
\item the $p$ dangling wires from every copy of $B$
\item the parameters inside every copy of $B$ (either linked copies or discrete children)
\end{itemize}

We solve these problems (and justify these solutions below) by considering $B$ as parameterised by:
\begin{itemize}
	\item Instances of $\alpha_j \; \forall j$
	\item Instances of $\delta_k \; \forall k > 1$
	\item Input vectors $v \in \Hilbert^{\tensor p}$ that ``plug'' the inputs inside $B$.
\end{itemize}

Since we can show equivalence of complex matrices by showing that they perform the same operation on any input,
we need to show that for any choice of $\alpha_j$, $\delta_{k>1}$
and for any input that equation (\ref{eqnbbox2}) holds.
(And that if we are creating child instances of parameters then these equations hold
for any choice of each of those independently.)
Once we have specified values of $\alpha_j$, $\delta_{k>1}$ we may use our matrix interpretation
to get a complex matrix, but we need to do this for every possible choice of  $\alpha_j$, $\delta_{k>1}$.

Assuming we have made choices for the $\alpha_j$ and $\delta_{k>1}$,
we wish to justify that we can choose the input vector for the $p$ inputs of $B$ independently.
To do this we note that we can determine equality of complex matrices by showing they
perform the same operation on all basis elements.

\vskip 2em
\textbf{Claim: \;} The set of all vectors of the form
$\set{v_d \tensor \dots \tensor v_1 \tensor x}$
where
$v_j \in \Hilbert^{\tensor p}$
and
$x \in \Hilbert^{\tensor m}$ contains a basis for
$\Hilbert^{\tensor(m + dp)}$
.

Proof of claim: Note that we may form a basis for $V \tensor V'$ by taking the tensor products of the bases of $V$ and $V'$,
and therefore the above set contains all the basis elements of $\Hilbert^{\tensor p} \tensor \dots \Hilbert^{\tensor p} \tensor
\Hilbert^{\tensor m} \iso \Hilbert^{\tensor (m +dp)}$.

\vskip 2em

Given a choice of values for the $\alpha_j$, $\delta_{k > 1}$ and $v$
we denote this choice by $q$ and use $B_q$ to mean ``the sub-diagram $B$ from the series !-box form with this choice of variables''.
If one is copying variable names then these values must be the same in each copy of $B_q$,
but we will show the more general case of when you cannot assume that each instance of $B_q$ contains the same
choices of values for variables.
(Even when creating children of the $\alpha_j$ 
and $\delta_{k>1}$ these children are contained entirely inside each
instance of $B$ so specifying a choice for $q$ for each $B$ determines all
the values for parameters inside $B$.)

The last thing to define here is $G_q$.
$G_q$ is the subdiagram $G$ instantiated with the values described by $q$, same as for $B_q$.
Once we have chosen values for $q$ we may consider the matrix interpretation of the diagram:

\begin{align}
&\interpret{C} \interpret{B_{q_d}} \dots \interpret{B_{q_1}} \interpret{G_{q_o}} \\
G_q & : \Hilbert^{\tensor m} \to  \Hilbert^{\tensor n} \\
B_q &:  \Hilbert^{\tensor n} \to  \Hilbert^{\tensor n} \\
C &:  \Hilbert^{\tensor n} \to \bbC
\end{align}

Given an equation $\bbD_1 = \bbD_2$ of two families of diagrams, both parameterised by a (non-nested, separated) !-box $\delta_1$
(among other parameters)
we wish to remove our dependence on $\delta_1$ by
instead verifying a finite set of equations, each of which has a different value for $\delta_1$.
Note that for this to be the case we require the number of inputs to be equal;
i.e. $m_1 = m_2 =m$ and $p_1 = p_2 =p$, but we do not require $n_1 = n_2$ in equation \ref{eqnseries-bbox}.
With the aim of reducing notational clutter we instantiate $\delta_1 = d$ and express $D_1$ and $D_2$ in series !-box form,
with matrix interpretations:

\begin{align}
&\interpret{C_1} \interpret{B_{1,q_d}} \dots \interpret{B_{1,q_1}} \interpret{G_{1,q_0}} \\ 
&\interpret{C_2} \interpret{B_{2,q_d}} \dots \interpret{B_{2,q_1}} \interpret{G_{2,q_0}}
\end{align}

And we wish to know when these two interpretations are equal.
Rather than consider the matrices acting on two independent spaces
we view them as acting on the direct sum of those two spaces
and represent these maps as block matrices. 
(We drop the $\interpret{\cdot}$ notation when it would appear inside a matrix.)

\begin{align}
&\interpret{C_1} \interpret{B_{1,q_d}} \dots \interpret{B_{1,q_1}} \interpret{G_{1,q_0}} \nonumber \\
= &\interpret{C_2} \interpret{B_{2,q_d}} \dots \interpret{B_{2,q_1}} \interpret{G_{2,q_0}} & \forall d, q \\
\iff&
	\begin{bmatrix}	1 & -1\end{bmatrix}
	\begin{bmatrix}C_1 & 0 \\ 0 & C_2\end{bmatrix}
	\begin{bmatrix}B_{1, q_d} & 0 \\ 0 & B_{2, q_d}\end{bmatrix}
	\dots
	\begin{bmatrix}B_{1, q_1} & 0 \\ 0 & B_{2, q_1}\end{bmatrix}
	\begin{bmatrix}G_{1,q_0} & 0 \\ 0 & G_{2,q_0}\end{bmatrix}
	\begin{bmatrix}	\id_m \\ \id_m\end{bmatrix}
     = 0 & \forall d, q
\end{align}

Think of this as copying an input vector $x \in \Hilbert^m$ as $x :: x$ in $\Hilbert^m \oplus \Hilbert^m$,
then applying
\begin{align*}
&\interpret{C_1} \interpret{B_{1,q_d}} \dots \interpret{B_{1,q_1}} \interpret{G_{1,q_0}} \\ 
\text{and} \\
&\interpret{C_2} \interpret{B_{2,q_d}} \dots \interpret{B_{2,q_1}} \interpret{G_{2,q_0}}
\end{align*}
to the left and right copies respectively.
After that we apply a minus sign to the right hand result and add that to the left hand result,
effectively comparing them and demanding the difference to be 0.
We seek to prove:

\begin{align}
	& \nonumber
	\begin{bmatrix}	1 & -1\end{bmatrix}
	\begin{bmatrix}C_1 & 0 \\ 0 & C_2\end{bmatrix}
	\begin{bmatrix}B_{1, q_d} & 0 \\ 0 & B_{2, q_d}\end{bmatrix}
	\dots
	\begin{bmatrix}B_{1, q_1} & 0 \\ 0 & B_{2, q_1}\end{bmatrix}
	\begin{bmatrix}G_{1,q_0} & 0 \\ 0 & G_{2,q_0}\end{bmatrix}
	\begin{bmatrix}	\id_m \\ \id_m\end{bmatrix}
     = 0 & \forall d \leq N, q \\
	\implies 
	&
	\begin{bmatrix}	1 & -1\end{bmatrix}
	\begin{bmatrix}C_1 & 0 \\ 0 & C_2\end{bmatrix}
	\begin{bmatrix}B_{1, q_d} & 0 \\ 0 & B_{2, q_d}\end{bmatrix}
	\dots
	\begin{bmatrix}B_{1, q_1} & 0 \\ 0 & B_{2, q_1}\end{bmatrix}
	\begin{bmatrix}G_{1,q_0} & 0 \\ 0 & G_{2,q_0}\end{bmatrix}
	\begin{bmatrix}	\id_m \\ \id_m\end{bmatrix}
     = 0 & \forall d, q
\end{align}

Recalling that $q$ is the choice of values for $\alpha_j$, $\delta_{k>1}$ and $v \in \Hilbert^{\tensor p}$,
we use $Q$ to denote the set of all possible choices.
We use $B'_q$ for the matrix that acts as the direct sum of $B_{1,q}$ and $B_{2,q}$:

\begin{align}
	B'_q := &
	\begin{bmatrix}B_{1, q} & 0 \\ 0 & B_{2, q}\end{bmatrix}
\end{align}

and inductively define the spaces:

\begin{align}
V_0 & := \text{span} \set{ \bigcup_{q \in Q} \text{Im} (\begin{bmatrix}G_{1,q} & 0 \\ 0 & G_{2,q}\end{bmatrix}\begin{bmatrix}\id_m \\ \id_m\end{bmatrix})} \\
V_j & := \text{span} \set{ V_{j-1} \cup \; \bigcup_{q \in Q} B'_q V_{j-1}}
\end{align}

The $V_j$ form an ascending sequence of subspaces, each containing the potential images of up to
$j$ applications of $B'_q$:
\begin{align}
	V_j \geq \text{Im}(\;B'_{q_k} \dots B'_{q_1}\, V_0\;)  \quad \forall k \leq j\; \forall q_k, \dots, q_1 \in Q
\end{align}

\vskip 2em

\textbf{Claim:} \;
There is a number $b$ such that
\begin{itemize}
\item if $j <b$ then $V_j > V_{j-1}$
\item if $j \geq b$ then $V_j = V_{j-1}$
\item $b \leq \dim ( \; \Hilbert^{n_1} \oplus \Hilbert^{n_2} \; )$
\end{itemize}

Proof of claim: Define $N:=\dim ( \; \Hilbert^{n_1} \oplus \Hilbert^{n_2} \; )$ and then
\begin{itemize}
\item $V_{j-1} \leq V_j \quad \forall j$
\item if $V_j = V_{j-1}$ then $V_{j+1} = V_j$
\item $V_j \leq \Hilbert^{n_1} \oplus \Hilbert^{n_2} \quad \forall j$
\item $\dim V_{j-1} \leq \dim V_j \leq N \quad \forall j$
\item The strictly increasing section of the sequence of the $\dim V_j$ must have length less than $N$
\item We declare $b$ to be the number such that $V_c = V_b \; \forall c \geq b$, and note $b \leq N$
\end{itemize}

\vskip 2em

Let $W$ be the kernal of the map $
	\begin{bmatrix}	1 & -1\end{bmatrix}
	\begin{bmatrix}C_1 & 0 \\ 0 & C_2\end{bmatrix}$

\textbf{Claim:} \;
\begin{align}
&V_j \leq W & \forall j \leq N\\ \nonumber
\quad\quad\quad\implies & V_j \leq W &\forall j
\end{align}

Proof of claim: Since $V_c = V_N $ when $ c \geq N \geq b$ it is enough to show that this is the case for all $V_j$ when $j \leq N$.
This is implied by the assumption in our theorem;
that for $d \leq N$ our diagrammatic equation holds,
and so for any choice of $d$ and $q_1, \dots, q_d$ this matrix equation holds:

\begin{align}
	\begin{bmatrix}	1 & -1\end{bmatrix}
	\begin{bmatrix}C_1 & 0 \\ 0 & C_2\end{bmatrix}
	\begin{bmatrix}B_{1, q_d} & 0 \\ 0 & B_{2, q_d}\end{bmatrix}
	\dots
	\begin{bmatrix}B_{1, q_1} & 0 \\ 0 & B_{2, q_1}\end{bmatrix}
	\begin{bmatrix}G_{1,q_0} & 0 \\ 0 & G_{2,q_0}\end{bmatrix}
	\begin{bmatrix}	\id_m \\ \id_m\end{bmatrix}
     = 0
\end{align}

\vskip 2em

We have shown that for any choice of inputs $v \in \Hilbert^p \tensor \dots \tensor \Hilbert^p \tensor \Hilbert^m$,
 and parameters $\alpha_j$, $\delta_{k>1}$ our matrix equations hold,
and by extension they hold on all elements of the space $\Hilbert^{\tensor (m + dp)}$,
i.e. that:

\begin{align}
&\forall d \leq N \; \forall q_d \dots q_1 \nonumber& \\
&	\begin{bmatrix}	1 & -1\end{bmatrix}
	\begin{bmatrix}C_1 & 0 \\ 0 & C_2\end{bmatrix}
	\begin{bmatrix}B_{1, q_d} & 0 \\ 0 & B_{2, q_d}\end{bmatrix}
	\dots
	\begin{bmatrix}B_{1, q_1} & 0 \\ 0 & B_{2, q_1}\end{bmatrix}
	\begin{bmatrix}G_{1,q_0} & 0 \\ 0 & G_{2,q_0}\end{bmatrix}
	\begin{bmatrix}	\id_m \\ \id_m\end{bmatrix}
     &= 0 \\
\implies 
&\forall d  \forall q_d \dots q_1 \nonumber& \\
&	\begin{bmatrix}	1 & -1\end{bmatrix}
	\begin{bmatrix}C_1 & 0 \\ 0 & C_2\end{bmatrix}
	\begin{bmatrix}B_{1, q_d} & 0 \\ 0 & B_{2, q_d}\end{bmatrix}
	\dots
	\begin{bmatrix}B_{1, q_1} & 0 \\ 0 & B_{2, q_1}\end{bmatrix}
	\begin{bmatrix}G_{1,q_0} & 0 \\ 0 & G_{2,q_0}\end{bmatrix}
	\begin{bmatrix}	\id_m \\ \id_m\end{bmatrix}
     &= 0
\end{align}

And therefore:

\begin{align}
&\interpret{C_1} \interpret{B_{1,q_d}} \dots \interpret{B_{1,q_1}} \interpret{G_{1,q_0}} =
\interpret{C_2} \interpret{B_{2,q_d}} \dots \interpret{B_{2,q_1}} \interpret{G_{2,q_0}} & \forall d \leq N \; \forall q_d \dots q_1 \\
\implies& 
\interpret{C_1} \interpret{B_{1,q_d}} \dots \interpret{B_{1,q_1}} \interpret{G_{1,q_0}} =
\interpret{C_2} \interpret{B_{2,q_d}} \dots \interpret{B_{2,q_1}} \interpret{G_{2,q_0}} & \forall d \; \forall q_d \dots q_1 
\end{align}

And therefore for any choice of value for $\alpha_j$ and $\delta_{k>1}$:

\begin{align}
&\interpret{\family{\bb{D}_1 }_{\alphadelta|\delta_1=d}} = \interpret{\family{ \bb{D}_2 }_{\alphadelta|\delta_1=d}} & \forall d \leq N \\
\implies&\interpret{\family{\bb{D}_1 }_{\alphadelta|\delta_1=d}} = \interpret{\family{ \bb{D}_2 }_{\alphadelta|\delta_1=d}} & \forall d
\end{align}
\end{proof}

\subsection{Proof of theorem \ref{thmfinite}}

\thmFinite*

\label{secfinite}
\begin{proof}

Throughout this proof we iterate on the set $\set{ \bbE_\kappa }$.
While at all times $\set{ \bbE_\kappa }$ is a finite verifying set for $\bbE$,
it is only at the end that $\set{ \bbE_\kappa }$ is a finite verifying set of simple equations.
We first show the existence of an ordered list of the !-boxes present in $E$,
compatible with the nesting order on both of the nesting diagrams of $E$.

\begin{definition}
We construct the ordered list $\delta_{k_1} \succ \delta_{k_2} \succ \dots \succ \delta_{k_n}$
by recursively picking a !-box that is nested in no other !-boxes,
then removing that !-box from the nesting diagram. Repeat on the new nesting diagram.
\end{definition}

\begin{definition} The algorithm \textbf{!-Removal}:
\begin{itemize}
	\item \textbf{If we are copying variable names:} \;
	Given a verifying set of equations $\set{\bbE_\kappa}_{\kappa \in K}$,
and a !-box $\delta_k$ nested in no other !-boxes present in the ${\bbE_\kappa}$:

We define !-Remove($\delta_k$) as the process described in theorem \ref{thmbbox2}.
It acts on the set $\set{\bbE_\kappa}_{\kappa \in K}$
by acting on each of the $\bbE_\kappa$ in turn,
finding the value $N_\kappa$, and
creating the new verification set:

\begin{align}\set{\bbE_\kappa}_{\kappa \in K'} :=
\bigcup_{\kappa \in K} \set{ \bbE_\kappa \at{\delta_k = 1}, \dots, \bbE_\kappa \at{\delta_k = N_\kappa} }
\end{align}

\item \textbf{If we are creating child instances:} \;
We do as above, but we pick the !-box $\delta_k$ 
such that all its child instances are nested in no other !-boxes present in the ${\bbE_\kappa}$,
and we act not only on each of the $\bbE_\kappa$ in turn but also on each of the child instances of $\delta_k$ in turn.
\end{itemize}

\end{definition}

\textbf{Claims:} \; 
\begin{itemize}
	\item !-Remove($\delta_k$) removes any dependency on $\delta_k$ in the verifying set $\set{\bbE_\kappa}_{\kappa \in K'}$
	\item $\set{\bbE_\kappa}_{\kappa \in K'}$ verifies $\set{\bbE_\kappa}_{\kappa \in K}$
	\item !-Remove($\delta_k$) does not alter the nesting ordering of any remaining !-boxes and phase variables
	in the verification pair
	\item The ordered list $\delta_{k_1} \succ \delta_{k_2} \succ \dots$ provides us with a sequence of !-boxes
	such that we can apply !-Remove($\delta_{k_{n+1}}$) to the output of !-Remove($\delta_{k_n}$).
	\item This ordered sequence of !-Removes results in a finite verifying set that has no dependence on any !-box.
\end{itemize}
Proof of claims: The first, second and fifth claims follow from Theorem~\ref{thmbbox2}.
The third and fourth of these claims are easy when we are copying variable names,
but when creating child instances of !-boxes below $\delta_k$
one should view instantiation as creating (distinctly named) copies of the nesting structure that
exists below $\delta_k$.

\vskip 2em

Removing phase variables is trickier than removing !-boxes,
because we know by theorem \ref{thmboth} that to do so in a manner compatible with !-box removal
we first need to know the largest number of !-box instances we are going to need.
Since we have already constructed the $\set{\bbE_\kappa}_{\kappa \in K}$ we can find the equation that resulted from
every !-box $\delta_k$ being instantiated at its largest amount, $N_k$, and use that equation.

\begin{definition} We use $\bbE'$ to denote the !-box free equation that is the result of instantiating
each !-box $\delta_k$ at its largest required amount $N_k$
\end{definition}

\textbf{Claim:} \; The equation $\bbE'$ 
contains the largest number of instances of $\alpha_j$ compared to any other equation $\bbE_\kappa$

Proof of claim: The number of instances of $\alpha_j$ is determined by the number of times the !-boxes
have been instantiated.

\vskip 2em

\begin{definition} The algorithm \textbf{$\alpha$-Removal}:
Given a verifying set $\set{\bbE_\kappa}_{\kappa \in K}$ we construct the set $A_j$ for variable $\alpha_j$ by considering the degree of $\alpha_j$ in $\bbE'$,
and then choosing enough valid values of $\alpha_j$ to reach the amount dictated by theorem \ref{thmparameter}.
We then form:

\begin{align}
\set{\bbE_\kappa}_{\kappa \in K'} :=
\bigcup_{\kappa \in K, a \in A_j} \set{ \bbE_\kappa \at{\alpha_j = a}}
\end{align}

\end{definition}

\textbf{Claim:} \; By theorem \ref{thmboth} $\set{\bbE_\kappa}_{\kappa \in K'}$ verifies $\set{\bbE_\kappa}_{\kappa \in K}$

And finally:

\textbf{Claim:} \, Starting with the verifying set $\set{\bbE}$: After applying !-Removes in the order dictacted by $\prec$ and then applying all possible $\alpha$-Removes
we construct a finite set $\set{\bbE_\kappa}_{\kappa \in K}$ of simple equations that verifies $\bbE$.

\end{proof}

\section{Generators and degrees}

We show here the generators of our three example languages (ZX, ZH and \ZW),
and then show the matrix degrees of these interpretations.

\label{secinterpretations}
\subsection{Generators}
\label{secgenerators}

\subsubsection{ZX}
\begin{itemize}
\item The Z spider has the following interpretation in $\Mat_{\bbC}$

\begin{align}
	\interpret{\family{\spider{gn}{\alpha}} \at {\alpha = a}} \quad = \quad 
\begin{bmatrix}
1 & 0 & \dots &  & 0 \\
0 & 0 & & & \vdots \\
\vdots & & \ddots & & \\
 & & & 0 & 0\\
0 & \dots & & 0 & e^{ia}
\end{bmatrix}
\end{align}

Making the substitution $Y := e^{i\alpha}$ (and therefore $Y^n = e^{ i n\alpha}$)
the Z spider has this interpretation in $\Mat_{\bbC[Y,Y\inv]}$:

\begin{align}
	\interpret{\spider{gn}{\alpha}} \quad = \quad 
\begin{bmatrix}
1 & 0 & \dots &  & 0 \\
0 & 0 & & & \vdots \\
\vdots & & \ddots & & \\
 & & & 0 & 0\\
0 & \dots & & 0 & Y
\end{bmatrix}
\end{align}

\item The Hadamard node has the following interpretation

\begin{align} \interpret{\node{H}{}} \quad = \quad
\begin{bmatrix}
1 & 1 \\
1 & -1
\end{bmatrix}
\end{align}

And admits no parameters.

\item The X spider interpretation is found by applying Hadamards nodes on all inputs and outputs of the corresponding Z spider.

\end{itemize}
\subsubsection{ZH}
\begin{itemize}
\item The H box in ZH has the following interpretation in $\Mat_\bbC$

\begin{align}
	\interpret{\family{\spider{ZH}{\alpha}}\at{\alpha = a}} \quad = \quad
\begin{bmatrix}
1 & 1 & \dots &  & 1 \\
1 & 1 & & & \vdots \\
\vdots & & \ddots & & \\
 & & & 1 & 1\\
1 & \dots & & 1 & a
\end{bmatrix}
\end{align}

We can simply equate $Y := \alpha$ and get the interpretation in $\Mat_{\bbC[Y,Y\inv]}$:

\begin{align}
	\interpret{\spider{ZH}{\alpha}} \quad = \quad
\begin{bmatrix}
1 & 1 & \dots &  & 1 \\
1 & 1 & & & \vdots \\
\vdots & & \ddots & & \\
 & & & 1 & 1\\
1 & \dots & & 1 & Y
\end{bmatrix}
\end{align}

\item The Z spider in ZH admits no parameters, and has the following interpretation

\begin{align}
	\interpret{\spider{white}{}} \quad = \quad
\begin{bmatrix}
1 & 0 & \dots &  & 0 \\
0 & 0 & & & \vdots \\
\vdots & & \ddots & & \\
 & & & 0 & 0\\
0 & \dots & & 0 & 1
\end{bmatrix}
\end{align}
\end{itemize}
\subsubsection{ZW}

We are explicitly using $ZW_\bb{C}$ here for compatibility with the other languages.
\begin{itemize}
	\item The Crossing $x$ has interpretation

\begin{align}
	\interpret{
\InputIfFileExists{crossing}{}{\input{./tikz/crossing.tikz}}
}
\begin{bmatrix}
1 & 0 & 0 & 0 \\
0 & 0 & 1 & 0 \\
0 & 1 & 0 & 0\\
0 & 0 & 0 & -1
\end{bmatrix}
\end{align}
	\item The W spider in bra-ket notation has interpretation

	\begin{align}
		\interpret{\wspider} \quad = \quad
	 \sum\limits_{k=1}^n \ket{\underbrace{0\ldots 0}_{k-1}1\underbrace{0\ldots 0}_{n-k}}
	 \end{align}

	\item The Z spider is parameterised by $\alpha \in \bb{C}$ and has interpretation 
	
\begin{align}
	\interpret{\family{\spider{white}{\alpha}}\at{\alpha = a}} \quad = \quad
\begin{bmatrix}
1 & 0 & \dots &  & 0 \\
0 & 0 & & & \vdots \\
\vdots & & \ddots & & \\
 & & & 0 & 0\\
0 & \dots & & 0 & a
\end{bmatrix}
\end{align}

	We can again simply equate $Y:= \alpha$ and get the interpretation in $\Mat_{\bbC[Y,Y\inv]}$:
	
\begin{align}
	\interpret{\spider{white}{\alpha}} \quad = \quad
\begin{bmatrix}
1 & 0 & \dots &  & 0 \\
0 & 0 & & & \vdots \\
\vdots & & \ddots & & \\
 & & & 0 & 0\\
0 & \dots & & 0 & Y
\end{bmatrix}
\end{align}

\end{itemize}

\subsection{degrees}

We will consider nodes parameterised by a single variable $\alpha$,
and express their degree with respect to $Y$ following the convention of appendix \ref{secgenerators}.

\subsubsection{ZX}

Using $Y^n:=e^{n i \alpha}$, the degrees in $Y$ of the generators (for $n \geq 0$) are:

\begin{align}
\maxdegree{+}{\alpha}{\spider{gn}{n \alpha}} &= n& \maxdegree{-}{\alpha}{\spider{gn}{n \alpha}} &= 0 \\
\maxdegree{+}{\alpha}{\spider{gn}{- n \alpha}} &= 0 &\maxdegree{-}{\alpha}{\spider{gn}{- n \alpha}} &= n \\
\maxdegree{+}{\alpha}{\spider{rn}{n \alpha}} &= n&\maxdegree{-}{\alpha}{\spider{rn}{n \alpha}} &= 0 \\
\maxdegree{+}{\alpha}{\spider{rn}{- n \alpha}} &= 0  &\maxdegree{-}{\alpha}{\spider{rn}{- n \alpha}} &= n \\
\maxdegree{+}{\alpha}{\node{H}{}} &= 0  &
\maxdegree{-}{\alpha}{\node{H}{}} &= 0
\end{align}

\subsubsection{ZH}

We equate $Y:= \alpha$, and for $P$ any Laurent polynomial:

\begin{align}
\maxdegree{+}{\alpha}{\spider{ZH}{P(\alpha)}} &= \maxdegree{+}{Y} P &
\maxdegree{-}{\alpha}{\spider{ZH}{P(\alpha)}} &= \maxdegree{-}{Y} P \\
\maxdegree{+}{\alpha}{\spider{white}{}} &= 0 & \maxdegree{-}{\alpha}{\spider{white}{}} &= 0
\end{align}

\subsubsection{ZW}

We equate $Y:= \alpha$, and for $P$ any Laurent polynomial:

\begin{align}
\maxdegree{+}{\alpha}{\spider{black}{}} &= 0  & \maxdegree{-}{\alpha}{\spider{black}{}} &= 0 \\
\maxdegree{+}{\alpha}{
\InputIfFileExists{crossing}{}{\input{./tikz/crossing.tikz}}
} &= 0 & \maxdegree{-}{\alpha}{
\InputIfFileExists{crossing}{}{\input{./tikz/crossing.tikz}}
} &= 0  \\
\maxdegree{+}{\alpha}{\spider{white}{P(\alpha)}} &= \maxdegree{+}{Y} P & 
\maxdegree{-}{\alpha}{\spider{white}{P(\alpha)}} &= \maxdegree{-}{Y} P \\
\end{align}

\section{Separability} \label{secSeparability}

We describe a non-separated pair of !-boxes as \textbf{separable} if we can perform the following operation:
\begin{align}

\InputIfFileExists{sep-l}{}{\input{./tikz/sep-l.tikz}}
 \quad = 
\InputIfFileExists{sep-r}{}{\input{./tikz/sep-r.tikz}}

\end{align}

We define pairs of nodes as separable if we can always separate !-boxes that are joined by edges between these pairs of nodes.
Note that we only need to consider nodes that have arbitrary arity,
since only they can be connected to !-boxes.

\begin{itemize}
\item The following pairs of nodes are separable, by language:
\begin{align}
\text{ZX:} \quad &
(\spider{gn}{}, \spider{rn}{}) \quad 
(\spider{gn}{}, \spider{gn}{}) \quad 
(\spider{rn}{}, \spider{rn}{}) \\
\text{ZH:} \quad &
(\spider{white}{}, \spider{white}{}) \quad 
(\spider{white}{}, \spider{ZHGrey}{}) \quad 
(\spider{ZHGrey}{}, \spider{ZH}{}) \\
\text{ZW:} \quad &
(\spider{white}{}, \spider{white}{}) \quad 
(\spider{white}{}, \spider{black}{})
\end{align}

The proofs of these follow immediately from the spider and bialgebra laws in the respective languages.

\item The following pairs of nodes are assumed to not be separable, by language:
\begin{align}
\text{ZX:} \quad & \text{Always separable} \\
\text{ZH:} \quad & 
(\spider{ZH}{}, \spider{ZH}{}) \quad 
(\spider{ZH}{}, \spider{white}{}) \\
\text{ZW:} \quad &
(\spider{black}{}, \spider{black}{})
\end{align}
\end{itemize}

Note that it is enough to specify the phase-free versions of these interactions,
because phases can always be moved away from the critical nodes.

\end{document}